\documentclass[11pt,letterpaper]{article}
\usepackage[T1]{fontenc}
\usepackage{lmodern,microtype}
\usepackage{amsmath,amssymb,amsthm,mathtools}
\usepackage[margin=1in]{geometry}
\usepackage{enumitem,needspace}
\usepackage[colorlinks=true,linkcolor=black,citecolor=black,urlcolor=blue]{hyperref}
\hypersetup{pdftitle={Near-Inverse-Linear Barriers for Explicit Affine Witness Isolation},pdfauthor={Sebastian Ben Daniel}}
\pdftrailerid{}
\newtheorem{theorem}{Theorem}[section]
\newtheorem{lemma}[theorem]{Lemma}
\newtheorem{proposition}[theorem]{Proposition}
\newtheorem{corollary}[theorem]{Corollary}
\theoremstyle{definition}

\theoremstyle{remark}

\newcommand{\F}{\mathbb F}
\newcommand{\K}{\mathbb K}
\newcommand{\Sol}{\operatorname{Sol}}
\newcommand{\rank}{\operatorname{rank}}
\newcommand{\im}{\operatorname{im}}
\newcommand{\codim}{\operatorname{codim}}
\newcommand{\Tr}{\operatorname{Tr}}
\newcommand{\NP}{\mathsf{NP}}
\newcommand{\Ppoly}{\mathsf{P}/\mathsf{poly}}

\newcommand{\USAT}{\mathsf{prUSAT}}
\newcommand{\Fam}{\mathcal K}
\newcommand{\Iso}{\mathcal I}
\newcommand{\one}{\mathbf 1}
\newcommand{\doi}[1]{\href{https://doi.org/#1}{doi:\nolinkurl{#1}}}
\setlist{itemsep=3pt,topsep=5pt,leftmargin=1.6em}
\allowdisplaybreaks[1]
\numberwithin{equation}{section}
\title{Near-Inverse-Linear Barriers for\\Explicit Affine Witness Isolation}
\author{Sebastian Ben Daniel}
\date{October 2026}
\begin{document}
\maketitle
\begin{abstract}
The Valiant--Vazirani isolation lemma reduces the satisfying
assignments of a circuit on $n$ variables to exactly one with
probability $\Omega(1/n)$, by intersecting them with a random affine
subspace chosen without looking at the circuit. We ask how much an
isolator can gain by reading the circuit while keeping the same form
of output. Our isolators are randomized nonuniform polynomial-size
transformations that receive a circuit and output the coefficients
of a system of linear equations over $\F_2$; they succeed when
exactly one satisfying assignment obeys the system. Success is
required only on circuits whose satisfying set is a nonempty affine
subspace, and no basis of that subspace is given. We show that
success probability $\Omega(n^{-\delta})$, for any fixed $\delta<1$,
implies $\NP\subseteq\Ppoly$. Conversely, $\NP\subseteq\Ppoly$ yields
a deterministic isolator of this kind, so the two statements are
equivalent. Closer to $1/n$, any success probability
$\omega(\log n/n)$ implies circuits of size $(s+2)^{O(1)}2^{o(v)}$
for the satisfiability of length-$s$ circuits on $v$ variables. The
results extend to outputs that are unions of at most $n^\beta$ affine
systems, with success $\Omega(n^{-\delta})$, whenever
$\beta+\delta<1$. We obtain no consequence from success $C/n$ with
$C$ constant.
\end{abstract}

\section{Introduction}\label{sec:result}

Valiant and Vazirani \cite{VV} showed how to turn a satisfiable
circuit into a uniquely satisfiable one. Given a circuit $H$ on $n$
variables, choose a random number of random linear equations over
$\F_2$ and add them to $H$ as extra constraints. With probability
$\Omega(1/n)$ exactly one satisfying assignment of $H$ survives.
The equations are chosen without looking at $H$.

Dell, Kabanets, van Melkebeek, and Watanabe \cite{DKMW} asked
whether the success probability can be improved. They considered
arbitrary randomized nonuniform polynomial-size procedures that map
$H$ to a circuit whose satisfying assignments are among those of $H$,
and showed that success probability above $2/3$ is achievable if and
only if $\NP\subseteq\Ppoly$. The companion paper \cite{BD} extends
the collapse to constant success probabilities. For a natural class
of blackbox procedures that includes the Valiant--Vazirani one, the
authors of \cite{DKMW} also proved an unconditional upper bound of
$O(1/n)$ on the success probability.

So for procedures that may read the circuit, a collapse is known at
constant success probability, while the best algorithm succeeds with
probability about $1/n$. This paper narrows the gap for the kind of
output that Valiant and Vazirani use: a system of linear equations.

\paragraph{The model.}
An \emph{explicit affine isolator} receives the description of a
Boolean circuit $H$ on $n$ ordered witness variables and outputs a
binary matrix $A$ with $n$ columns and a vector $b$. We call the
system $Az=b$ a \emph{filter}. The isolator succeeds on $H$ if exactly
one satisfying assignment of $H$ satisfies $Az=b$. The isolator is a
randomized nonuniform circuit family of size polynomial in the
description length $L$ of $H$. The matrix, the right-hand side, and
the number of equations may depend on the whole description and on
the random seed.

We require success only on \emph{affine inputs}: circuits whose
satisfying set $\Sol(H)$ is a nonempty affine subspace of $\F_2^n$.
Such a circuit is still an arbitrary circuit. It need not display
equations, an origin, or a basis for its satisfying set. Asking for
success on fewer inputs makes the hypothesis weaker and the result
stronger. If the dimension $r$ of $\Sol(H)$ were known, $r$ random
equations would isolate with constant probability. The difficulty
for the isolator is that $r$ is hidden in the syntax of $H$.

\subsection{Results}

\begin{theorem}[Fixed-exponent affine isolation barrier]\label{thm:main}
Fix an integer $k\ge1$, and put $\alpha=k/(k+1)$. Let $\Iso$ be a
randomized nonuniform polynomial-size transformation. On a circuit
$H$ of description length $L$, with $n$ ordered witness variables,
it returns an explicit binary matrix $A$ with $n$ columns and a
compatible vector $b$. Its computation, advice, random-seed length,
and output length are bounded by fixed polynomials in $L$.
The matrix, right-hand side, and number of rows may depend on the
whole description of $H$ and on the seed.

Suppose that, for every sufficiently large $n$ and every such circuit
with a nonempty affine satisfying set,
\begin{equation}\label{eq:main}
 \Pr\bigl[\,|\Sol(H)\cap\{z:Az=b\}|=1\,\bigr]
 \ge \frac{2}{n^{\alpha}}.
\end{equation}
Then $\NP\subseteq\Ppoly$.
\end{theorem}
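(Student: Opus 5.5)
The plan is to show that such an isolator forces a polynomial-size circuit family for an $\NP$-complete promise problem. By Valiant--Vazirani and standard self-reducibility arguments, circuits for $\USAT$ suffice to get $\NP\subseteq\Ppoly$. The lever is that the isolator's output \emph{leaks the hidden dimension}. If $|\Sol(H)\cap\{Az=b\}|=1$ and $\Sol(H)=p+V$ with $\dim V=r$, then the restriction of $A$ to $V$ is injective. So the number of rows, and more precisely the rank of $A$ on any subspace we control, is at least $r$. Conversely, if $A$ has many more independent rows than $r$ in ``generic'' directions, then $Az=b$ is typically inconsistent on $\Sol(H)$. So on affine inputs the isolator must output, with probability $\ge 2n^{-\alpha}$, a matrix whose behaviour on known coordinate blocks reveals $r$ up to an additive error. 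Our reduction will arrange that $r$ encodes satisfiability information that no small circuit should be able to compute.

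\textbf{Step 1: an affine gadget.} Given $t$ instances $\varphi_1,\dots,\varphi_t$ of $\USAT$ on $v$ variables, we build a circuit $H$ whose witness variables are split into blocks $(y_i,x_i)$. The aim is that $\Sol(H)$ is a nonempty affine subspace in which the block $y_i\in\F_2^N$ is free exactly when $\varphi_i$ is satisfiable, with $x_i$ pinned to the unique witness, and pinned otherwise. The membership test for a fixed assignment is then polynomial: check $\varphi_i(x_i)$, or check that block $i$ is in its default state. We must choose the default states and padding so that the overall set is a product of affine pieces and hence affine. This gadget is the step I expect to be the main obstacle. A naive ``point or flat'' union is not affine, so the construction must be arranged so that the satisfiable and unsatisfiable cases each yield a single flat in the block coordinates. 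The crucial feature is that the circuit gives no basis and no explicit indication of which case holds. The total dimension is $r=N\cdot\#\{i:\varphi_i\in\mathrm{SAT}\}$ plus a known offset. The parameter $n$ is about $t(N+v)$.

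\textbf{Step 2: from success to a dimension estimate, then to circuits.} For each block $i$, we compute from $(A,b)$ the rank of the columns of $A$ indexed by $y_i$. On a successful run this rank is $N$ when $\varphi_i$ is satisfiable, since the map must be injective on the free block. To control the unsatisfiable blocks, we randomize the instance first: apply a random invertible linear change of the $y$-coordinates (a transformation computable inside $H$) and a random permutation of the blocks. Then the isolator cannot waste rows on pinned blocks without being unable to distinguish them. An averaging argument over the $2n^{-\alpha}$ success probability then shows that, for a typical position, the block rank separates the two cases with advantage roughly $n^{-\alpha}$. Here the budget matters: all $t$ blocks must share $n$ variables, and success probability $n^{-\alpha}$ with $\alpha=k/(k+1)$ lets us take $t\approx n^{1/(k+1)}$ blocks while still extracting one bit per block.

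\textbf{Step 3: amplification and advice.} To boost this advantage, we embed the target instance at a random position among $t-1$ fixed advice instances, hardwired nonuniformly and chosen by a greedy or probabilistic argument as in Karp--Lipton-style hard-instance selection. We repeat polynomially many times and take a majority. Finally, we fix the random seeds by Adleman's trick. This yields polynomial-size circuits deciding $\USAT$ on the promise, hence $\NP\subseteq\Ppoly$.
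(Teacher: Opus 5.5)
There is a genuine gap, and it is in Step~2, where the argument rests. Your outer frame ($\USAT$ via Valiant--Vazirani, fixed seeds, nonuniformly chosen companion instances) matches the paper; the column-rank statistic does not work. Success only forces $A$ to be injective on the direction space of $\Sol(H)$. It gives no upper bound on the rank of $A$ on pinned blocks, because the isolator chooses $b$ after reading $H$. Your converse (surplus rank makes $Az=b$ typically inconsistent) treats that adversarial $b$ as random.

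Step~1 makes this concrete. As stated, the gadget cannot be produced by an efficient compiler: at the default point, when $\varphi_i(x_i)=0$, the block test would have to know whether $\varphi_i$ is satisfiable. Any repair must put the unsatisfiable-case set of a block inside its satisfiable-case set for every possible witness. For example, accept $(y_i,x_i)$ when $y_i=x_i=0$, or when $y_i\neq0$ and $\varphi_i(x_i/y_i)=1$ in $\F_{2^v}$. With independent blocks, all these sets then contain the origin, which your linear re-randomization fixes. So $A=I_n$, $b=0$ isolates with probability one on every gadget while having full rank on every block. More generally, once the one-NO sets are nested in the all-YES set, independent blocks give them a common point, so one filter isolates inside all of them at once and no averaging over designations separates anything.

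Your parameter count is also inverted. Separating success $2n^{-\alpha}$ by averaging over designations needs about $n^{\alpha}$ sources, not $t\approx n^{1/(k+1)}$, whose natural threshold of order $1/t$ lies far above $2n^{-\alpha}$ once $k\ge2$. Finally, a majority vote needs a two-sided advantage on every input, which an advantage at a typical position does not give.

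What is missing is the paper's coupling of designations through a shared tag. All blocks read one tag $t\in\F_2^d$, and source $i$ owns the cell $K_i$ of a flag partition of $\F_2^d$. Block $i$ is forced to $x_i=P_i(t;w_i)$, where the residual polynomial $P_i$ has no constant term and is formally zero exactly when $t\in K_i$. Hence the one-NO set $R_j$ is the slice over $K_j$ of the all-YES graph, whatever the unknown $w_j$ is, and it can be computed from the other $m-1$ witnesses.

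The tester counts exactly, by Gaussian elimination over a fixed seed list, how often the filter meets $R_j$ in one point. For a NO source, $R_j$ is the whole satisfying set, so the count is at least the success probability up to sampling error. For an all-YES pool, each filter cuts the graph in an affine set with affine tag projection $T$, and $|R_j\cap\{Az=b\}|=|T\cap K_j|$. An affine $T$ meets at most two flag cells in exactly one point, so the average over designations is at most $2/m$ for every filter, with no assumption on the isolator.

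These one-sided testers never accept a NO source, and they are combined by a greedy domination argument, not by majority. Finally, the up to $d=m-1$ cell equations are packed into $P_i$ of degree about $d/q$ over $\F_{2^q}$, and blocks are checked by deterministic root listing as computed wires. This keeps each block at $q$ bits, so $m=(q+1)^k$ sources fit in arity $n=(q+1)^{k+1}-1$ and $2/m<2/n^{\alpha}$.
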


\begin{corollary}[Every fixed exponent below one]\label{cor:exponents}
In the same model, a worst-case singleton guarantee
$\Omega(n^{-\delta})$ on every sufficiently large nonempty affine
circuit input, for any fixed $\delta<1$, implies
$\NP\subseteq\Ppoly$.
\end{corollary}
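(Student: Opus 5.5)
The corollary should follow from Theorem~\ref{thm:main} by two reductions: choose the integer $k$ so that $k/(k+1)$ exceeds $\delta$, then boost the constant in front of $n^{-\delta}$. Fix $\delta<1$ and suppose the isolator $\Iso$ succeeds with probability at least $c\,n^{-\delta}$ on every sufficiently large nonempty affine input, for some constant $c>0$. If $\delta\le 0$, the guarantee is a constant and is at least $c\,n^{-1/2}$ for $n\ge1$, so we may assume $0<\delta<1$. Choose an integer $k\ge1$ with $\alpha=k/(k+1)>\delta$; such a $k$ exists because $k/(k+1)\to1$. Then
\[
c\,n^{-\delta}\ \ge\ 2\,n^{-\alpha}
\quad\text{whenever}\quad
n^{\alpha-\delta}\ge 2/c .
\]
Since $\alpha-\delta>0$ is fixed, this holds for all sufficiently large $n$. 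So $\Iso$ itself meets hypothesis \eqref{eq:main} of Theorem~\ref{thm:main} with this $k$, and $\NP\subseteq\Ppoly$ follows.

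We now check that the model matches. The corollary is stated ``in the same model'', so $\Iso$ is a randomized nonuniform polynomial-size transformation. Its computation, advice, seed length and output length are bounded by fixed polynomials in $L$, and it outputs an explicit pair $(A,b)$ with $n$ columns. The success event $|\Sol(H)\cap\{z:Az=b\}|=1$ is the same event in both statements. The hypothesis is required on the same input class: affine inputs with $n$ sufficiently large. The theorem's quantifier ``for every sufficiently large $n$'' absorbs the threshold $n_0$ beyond which $n^{\alpha-\delta}\ge 2/c$ and the corollary's own large-$n$ threshold. No amplification by repetition is needed. This matters because repetition cannot be applied naively here: the output must be a single affine system, and the union of several filters is a different model.

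The only step needing care is that the $\Omega(\cdot)$ constant and the threshold are fixed independently of the input. The corollary demands a worst-case guarantee with fixed $c$ and fixed threshold, so the comparison above is uniform over all inputs.

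Everything else reduces to Theorem~\ref{thm:main}, which carries the real content. There is no genuine obstacle beyond checking that the exponent gap $\alpha-\delta>0$ turns the arbitrary constant $c$ into the specific constant $2$ for large $n$.
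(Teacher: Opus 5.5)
Your proof is correct and is the same as the paper's: choose $k$ with $k/(k+1)>\delta$, so that $c\,n^{-\delta}\ge 2n^{-k/(k+1)}$ for all sufficiently large $n$, and invoke Theorem~\ref{thm:main}. One small nit: for $\delta<0$ the guarantee $c\,n^{-\delta}$ is not a constant but grows, so the hypothesis is unsatisfiable, as the paper remarks; your inequality $c\,n^{-\delta}\ge c\,n^{-1/2}$ still holds, and the case split is unnecessary anyway, since the same choice of $k$ works for every $\delta<1$.
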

\begin{proof}
Write the guarantee as at least $c n^{-\delta}$ for a fixed $c>0$.
Choose a fixed positive integer $k$ with $k/(k+1)>\delta$.
Then $c n^{-\delta}\ge2n^{-k/(k+1)}$ for all sufficiently large $n$,
and Theorem~\ref{thm:main} applies. If $\delta<0$, the stated
asymptotic probability guarantee itself is impossible; the
nonvacuous range is $0\le\delta<1$.
\end{proof}

The exponent is measured in the witness arity $n$, not in the
description length $L$; the two are kept apart throughout. Random
affine hashing achieves $\Omega(1/n)$ on every satisfiable input
(Section~\ref{sec:VV}), so the exponent $1$ cannot be improved unless
$\NP\subseteq\Ppoly$.

The implication is an equivalence. If $\NP\subseteq\Ppoly$, the usual
search-to-decision reduction finds a satisfying assignment $w$, and
the filter $z=w$ has success one. Hence either some polynomial-size
explicit affine isolator has success one, or none has success
$\Omega(n^{-\delta})$ for any fixed $\delta<1$
(Corollary~\ref{cor:equivalence}).

The construction has an integer parameter $k$, which is fixed in
Theorem~\ref{thm:main}. Letting $k$ grow with the input gives
consequences for success probabilities closer to $1/n$. The gadget
becomes larger, so the conclusions are weaker than
$\NP\subseteq\Ppoly$. We write $\log$ for base-two logarithms and
$\ln$ for natural logarithms. An isolator has \emph{resource
degree} $a$ if its circuit size, advice, seed length, and output
length are at most $(L+2)^a$.

\begin{theorem}[Growing-parameter consequences]\label{thm:growingintro}
In the model of Theorem~\ref{thm:main}, suppose the stated success
lower bound holds on every sufficiently large nonempty affine
circuit input.
\begin{enumerate}
\item If success is at least $c\,2^{\theta\sqrt{\log_2 n}}/n$ for
fixed $\theta,c>0$, then every language in $\NP$ has circuits of size
$2^{O((\log s)^2)}$ on inputs of length $s$.
\item If success is at least $c(\log_2 n)^A/n$ for fixed $c>0$ and
$A>1$, then circuit satisfiability on descriptions of length $s$
has circuits of size $2^{O(s^{1/A})}$, and hence subexponential size.
\item If success is at least $p(n)$ with $np(n)/\log_2n\to\infty$,
then circuit satisfiability on length-$s$ descriptions with at most
$v$ witness variables has circuits of size
$(s+2)^{O(1)}2^{o(v)}$, for $1\le v\le s$.
\item If success is at least $C\log_2(n)/n$ for a fixed $C>0$,
and the isolator has resource degree $a$, then circuit
satisfiability on length-$s$ descriptions with at most $v$ witness
variables has circuits of size
$(s+2)^{O(1)}2^{(2\kappa(a)/C)v}$. Here $\kappa(a)$ is the
parameter-independent resource exponent of
Theorem~\ref{thm:resource}.
\end{enumerate}
\end{theorem}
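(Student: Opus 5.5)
The plan is to derive all four items from a single quantitative, parameter-uniform version of the proof of Theorem~\ref{thm:main}. I take this to be the content of Theorem~\ref{thm:resource}, and I would state it in the following form. Suppose an isolator of resource degree $a$ achieves success at least $2n^{-k/(k+1)}$ on nonempty affine inputs of arity $n$, where $k$ may now depend on the input. Then the gadget built with parameter $k$ from a length-$s$ circuit on $v$ witness variables has arity $n=n(s,v,k)$, and satisfiability of such circuits is decided by circuits of size roughly $(s+2)^{\kappa(a)}\,2^{\kappa(a)v/(k+1)}$. Here $\kappa(a)$ does not depend on $k$. The guiding heuristic is that the gadget splits the $v$ witness bits into $k+1$ blocks and pays exponentially only in one block. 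The first step is to check that every constant in the proof of Theorem~\ref{thm:main} is uniform in $k$: the size of the gadget, the amplification or averaging over the seed, and the final nonuniform fixing of advice. This is what lets $k=k(s,v)$ be a function chosen by the nonuniform circuit.

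The second step is the calibration. The hypothesis $p(n)\ge2n^{-k/(k+1)}=2n^{1/(k+1)}/n$ holds exactly when
\begin{equation*}
 \frac{\log_2 n}{k+1}\le \log_2\bigl(np(n)/2\bigr),
\end{equation*}
so I take the least $k$ satisfying this at the gadget arity $n$. The four items then read off as follows.
\begin{enumerate}
\item For $p=c2^{\theta\sqrt{\log n}}/n$, one may take $k+1\approx\sqrt{\log n}/\theta$.
\item For $p=c(\log n)^A/n$, one needs $k+1\approx\log n/(A\log\log n)$.
\item For $np/\log n\to\infty$, one gets $k+1\ge\log n/\log(np/2)$, which is $o(\log n)$ but still tends to infinity. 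The cost term $v/(k+1)$ then becomes $o(v)$ once one checks that $\log n=\Theta(v)$ or $O(\log s+v)$ in the gadget.
\item For $p=C\log n/n$, one has $n^{1/(k+1)}\le (C/2)\log n$. Inverting this and feeding it into the exponent $\kappa(a)v/(k+1)$ should give $2^{(2\kappa(a)/C)v}$.
\end{enumerate}
For items 1 and 2, I would instead trade $v$ against $s$ (using $v\le s$) and pick $k$ as a function of $s$ to balance the polynomial factor $(s+2)^{\kappa(a)(\ldots)}$ against the exponential one. This balancing gives $2^{O((\log s)^2)}$ and $2^{O(s^{1/A})}$ respectively. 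The passage from circuit satisfiability to all of $\NP$ in item 1 is the standard Cook--Levin padding, since quasipolynomial bounds compose.

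The main obstacle is the circularity between $n$ and $k$. The arity $n$ of the gadget grows with $k$, possibly like $v^{O(k)}$ or like $(k+1)\cdot(\text{block size})$, while the admissible $k$ is defined through $n$. I would resolve this by first fixing the explicit formula for $n(s,v,k)$ from the construction. I would then show that $\log n$ is $\Theta(\log s + k\log v)$, or $\Theta(v)$ in the regime of items 3 and 4, and solve the inequality above by monotonicity: the left side decreases in $k$ faster than $\log(np/2)$ grows. One must also ensure that $n$ is "sufficiently large", so that the worst-case guarantee applies. This is handled by padding the gadget with dummy free variables, which keeps the satisfying set affine and only increases $n$ monotonically. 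I expect the bookkeeping in item 4, where the constant $2\kappa(a)/C$ must come out exactly, to be the most delicate part.
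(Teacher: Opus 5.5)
There is a genuine gap. It lies in the quantitative form you give the parameter-uniform transfer, and everything downstream depends on it.

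The gadget does not split a $v$-bit witness into $k+1$ blocks. It combines $m=(q+1)^k$ whole $\USAT$ sources into one circuit of arity $n=(q+1)^{k+1}-1$. Each source keeps its full $q$-bit block, with $q=v$ after the witness-preserving Valiant--Vazirani step of Lemma~\ref{lem:VV}. Theorem~\ref{thm:resource} then gives a decider of size $N^{\kappa(a)}$ with $N=n+\ell+2$. That is at most a constant times $(\ell+2)^{\kappa(a)}+(q+1)^{\kappa(a)(k+1)}$, so the cost \emph{grows} with $k$, and there is no factor $2^{\kappa(a)v/(k+1)}$ anywhere.

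Taken at face value, your formula would let you choose $k=v$ under the hypothesis of item~1 and obtain polynomial-size circuits. In item~4, the exponent $\kappa(a)v/(k+1)$ equals $(2\kappa(a)/C)v$ only for the constant value $k+1=C/2$. That value is incompatible with $n^{1/(k+1)}\le(C/2)\log_2 n$ at large $n$. So the ``balancing'' in items~1--2 and the ``inversion'' in item~4 do not follow from your model; they are asserted to land on the stated bounds.

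The idea you are missing is \eqref{eq:endpointnumerator}: $n^{1/(k+1)}<q+1$ for every $k$. So the hypothesis at the gadget arity amounts to $np(n)\ge2(q+1)$ at $n=(q+1)^{k+1}-1$.
\begin{itemize}
\item The only job of $k$ is to make $n$ large enough for the numerator $np(n)$ to reach $2(q+1)$. Since the cost is polynomial in $n$, one takes the least such $k$. Your choice of the least admissible $k$ is right, although your own cost formula would favor the largest.
\item Then bound $\log_2 n<(k+1)\log_2(q+1)$ in each regime.
\begin{itemize}
\item Item~1: $k=\lceil B\log_2(q+1)\rceil$ with $B>\theta^{-2}$ gives $\log n=O((\log q)^2)$.
\item Item~2: $k\approx(4(q+1)/c)^{1/A}/\log_2(q+1)$ gives $\log n=O(q^{1/A})$.
\item Item~4: $k=\lceil2(q+1)/(C\log_2(q+1))\rceil$ gives $\log_2 n\le2(q+1)/C+2\log_2(q+1)$. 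Hence $n^{\kappa(a)}\le2^{2\kappa(a)/C}(q+1)^{2\kappa(a)}2^{(2\kappa(a)/C)q}$, which is exactly where the constant $2\kappa(a)/C$ comes from.
\item Item~3: this follows from item~4 by letting $C$ grow with $v$ nonuniformly, which gives $\log n=o(q)$.
\end{itemize}
\item Finally substitute $q=v$ through Lemma~\ref{lem:VV}, or $q=s$ for item~1.
\end{itemize}
Thus $\log n$ is $\Theta(v)$ in item~4 and $o(v)$ in item~3 because the arity is forced to be exponential in roughly $q/C$, not because the witness is divided.

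Your padding of the gadget with free dummy variables is unnecessary, since the finitely many small witness caps can be hardwired. It also works against you. The all-YES regional bound $2/m$ is fixed by the number of cells, while the promised success is evaluated at the larger padded arity, so padding only shrinks the gap.
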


The constants in these bounds may depend on the isolator and on the
fixed parameters shown. Part~2 is stated for circuit satisfiability;
for another language in $\NP$, the length of its reduction to circuit
satisfiability enters the exponent (Section~\ref{sec:growing}). In
part~4 the rate is proportional to $\kappa(a)/C$, so a large constant
$C$ helps only in relation to the isolator's resource degree. All
four conclusions are nonuniform circuit upper bounds. Ruling out the
hypotheses therefore needs nonuniform hardness assumptions; the
uniform exponential time hypothesis is not enough. We obtain no
consequence from success $C/n$ with $C$ fixed.

Finally, the argument extends to filters that are unions of several
affine systems, each given by its coefficients. If the number of
systems is at most $n^\beta$ and the success probability is
$\Omega(n^{-\delta})$ with $\beta+\delta<1$, then again
$\NP\subseteq\Ppoly$ (Corollary~\ref{cor:unions}). This includes
parity decision trees with at most $n^\beta$ accepting leaves.

\subsection{Proof overview}\label{sec:overview}

The proof follows the strategy of \cite{DKMW}: use the isolator to
build small nonuniform circuits for unique satisfiability. Let
$\USAT$ be the promise problem whose YES instances are circuits with
exactly one satisfying assignment and whose NO instances are
unsatisfiable circuits. By \cite{VV}, polynomial-size circuits for
$\USAT$ give $\NP\subseteq\Ppoly$.

\paragraph{A gadget that hides a dimension.}
Take $m$ instances $E_1,\ldots,E_m$ of $\USAT$, the \emph{sources},
each on $q$ variables, and fix affine sets
$K_1,\ldots,K_m\subseteq\F_2^d$, the \emph{cells}. We build one
circuit $H_Q$ on variables $z=(t,x_1,\ldots,x_m)$, with a \emph{tag}
$t\in\F_2^d$ and one $q$-bit block $x_i$ per source, such that:
\begin{itemize}
\item if every source has a unique witness, $\Sol(H_Q)$ is the graph
of an affine map $t\mapsto(x_1,\ldots,x_m)$ over all of $\F_2^d$;
\item if $E_j$ is unsatisfiable and the other sources have unique
witnesses, $\Sol(H_Q)$ is a smaller affine set $R_j$ lying over the
cell $K_j$.
\end{itemize}
When all sources are satisfiable, $R_j$ is exactly the part of the
graph over $K_j$, whatever the witness of $E_j$ is. The description
of $H_Q$ is computed from the sources alone and is the same in both
cases, so an isolator that does well on inputs of the second kind
must also isolate inside the regions $R_j$ on inputs of the first
kind, unless it can tell the two apart.

\paragraph{Testing one source.}
Suppose we know the witnesses of all sources except $E_j$. Then we
can compute $R_j$, by an origin and a basis, without knowing whether
$E_j$ is satisfiable. Run the isolator on $H_Q$ with a fixed list of
seeds and count how often its filter meets $R_j$ in exactly one
point. Since the filter is given by its coefficients, this is a rank
computation. If $E_j$ is unsatisfiable, then $R_j=\Sol(H_Q)$ and the
count is at least the isolator's success probability, up to a
sampling error. If all sources are satisfiable, each filter cuts the
graph in an affine set, whose projection $T$ to the tag space is
again affine, and the filter isolates inside $R_j$ exactly when
$|T\cap K_j|=1$. The same $T$ serves all $j$, because the description
of $H_Q$ does not depend on $j$.

\paragraph{The flag partition.}
How many cells can one affine set $T$ meet in exactly one point? For
the cells
\[
 K_j=\{t:t_1=\cdots=t_{j-1}=0,\ t_j=1\}\quad(1\le j\le d),
 \qquad K_{d+1}=\{0\},
\]
the answer is at most two (Lemma~\ref{lem:flag}). So with $m=d+1$
cells, in the all-YES case the count averaged over $j$ is at most
$2/m$. If the isolator's success exceeds $2/m$ by an inverse
polynomial, then in every pool of satisfiable sources the count is
low for at least one $j$, while it is never low when $E_j$ is
unsatisfiable. The domination argument of \cite{DKMW}
(Lemma~\ref{lem:domination}) turns such one-sided tests into a
polynomial-size decider for $\USAT$.

\paragraph{Keeping the arity small.}
The threshold $2/m$ has to be expressed in the arity $n=d+mq$ of
$H_Q$. A direct encoding, in which block $i$ records the defining
equations of $K_i$ injectively, needs $q$ to be at least the
codimension of $K_i$, which is as large as $d$. That gives
$n\approx m^2$ and the threshold $2/\sqrt n$. To do better we pack
the equations of $K_i$ into the coefficients of a polynomial
$P_i(t;W)$ over $\F_{2^q}$ of degree about $d/q$, with no constant
term, and let the circuit accept the block value $x_i$ when some root
$w$ of $P_i(t;W)-x_i$ satisfies $E_i$. All roots can be listed by a
deterministic circuit, so no new variables are needed. For a source
with unique witness $w_i$ this forces $x_i=P_i(t;w_i)$, which is
affine in $t$. Now $q$ can be the witness length of the sources while
$d$ is any fixed power of it. With
\begin{equation}\label{eq:parameters}
 1\le q\le\ell,\qquad m=(q+1)^k,\qquad d=m-1,
 \qquad n=d+mq=(q+1)^{k+1}-1,
\end{equation}
where $\ell$ is the description length of the sources, the bound
$2/m$ equals $2/(n+1)^{k/(k+1)}$, which is strictly below
\eqref{eq:main}. Theorem~\ref{thm:main} takes $q=\ell$; keeping $q$
separate from $\ell$ matters for Theorem~\ref{thm:growingintro}.

\paragraph{Why not $1/n$.}
Any gadget of this type with one $q$-bit block per source has
threshold at least $2/m\ge2q/n$, and the flag construction attains
this up to a factor $1+1/q$ (Corollary~\ref{cor:blockbarrier}). The
remaining distance from $1/n$ is the witness length $q$ of the
sources. Raising $k$ makes $q$ smaller relative to $n$, but once $k$
grows the gadget is larger than any fixed polynomial in the source
length. This is why Theorem~\ref{thm:growingintro} has weaker
conclusions.

\subsection{Scope and related work}

The filter must be given by its coefficients. The test in the proof
decides by Gaussian elimination whether a filter meets a known affine
region in exactly one point, and a circuit that is merely promised to
compute an affine predicate does not give access to the coefficients
this needs. For the same reason, the extension to unions requires
each affine system to be listed. Whether the results hold for
arbitrary output circuits is open; the collapses known in that
setting \cite{DKMW,BD} are far from $1/n$.

The proof combines several known ingredients. The overall strategy,
including the passage from one-sided testers to nonuniform circuits,
follows \cite{DKMW}, and the reduction from satisfiability to unique
satisfiability is that of \cite{VV}. The partition of
Section~\ref{sec:flag} is the binary case of a recursive construction
of Bamberg, Filmus, Ihringer, and Kurz \cite[Lemma~6, p.~335]{BFIK};
we give it in coordinates and prove the property we use. Root listing
in Section~\ref{sec:roots} is Berlekamp's deterministic trace method
\cite{Ber67,Ber70}, proved here in the special case we need. The
evaluation map of Section~\ref{sec:compiler} has the form of the
polynomial hash without constant term described by Nguyen and Roscoe
\cite[Section~5, p.~9]{NR}. The search-to-decision reduction behind
the converse is standard; see for example \cite[Section~1.1]{BG}.
The paper is self-contained: every step needed for
Theorem~\ref{thm:main} is proved below.

\paragraph{Organization.}
Section~\ref{sec:model} fixes the encoding and the parameters.
Sections~\ref{sec:flag}--\ref{sec:size} build the gadget: the flag
partition, root listing, the compiled circuit, and its size.
Sections~\ref{sec:seeds}--\ref{sec:VV} prove Theorem~\ref{thm:main}.
Section~\ref{sec:growing} proves Theorem~\ref{thm:growingintro},
Section~\ref{sec:converse} the converse, and
Section~\ref{sec:unions} the extension to unions.
Section~\ref{sec:barriers} shows that the threshold $2/m$ cannot be
improved within this method.

\section{Model and conventions}\label{sec:model}

This section fixes the circuit encoding, the source promise problem,
and the parameters. Table~\ref{tab:parameters} at the end of the
section lists the notation.

All vector spaces and affine systems are over $\F_2$ unless another
field is specified. An affine set is a translate of a linear
subspace. We also allow the empty set when maximizing over retained
sets. A nonempty affine set of dimension $r$ has $2^r$ points; in
particular, every pair of distinct points is a binary affine line.

Fix an indexed bounded-fan-in Boolean-circuit encoding in which a
valid description of length $\ell$ has at most $\ell$ ordered witness
variables and $O(\ell)$ gates. Parsing and evaluation are polynomial
time. Allow a syntactically explicit ignored padding field, so a
constructed circuit can be padded to a common prescribed length
without changing its function or ordered variables. An encoding
with efficiently attainable common lengths and the same asymptotic
bounds is equally suitable.

The source promise problem $\USAT$ has YES instances with exactly
one satisfying assignment and NO instances with none. The decider
may answer arbitrarily on circuits with at least two satisfying
assignments. Malformed descriptions are rejected. At source length
$\ell\ge1$, fix a witness cap $1\le q\le\ell$. We consider the
restricted promise problem $\USAT(\ell,q)$ consisting of length-$\ell$
circuit descriptions with at most $q$ witness variables. Normalize
a circuit on $v\le q$ variables
to $q$ variables by conjoining $w_{v+1}=\cdots=w_q=0$.
This preserves both zero and unique satisfiability and adds $O(q)$
gates. Thus the $q$-bit strings can serve uniformly as candidate
source witnesses.

Fix a degree-$q$ irreducible polynomial $M_q(X)$ over $\F_2$ and a
polynomial-basis representation
\[
 \K=\F_{2^q}=\F_2[X]/(M_q(X)).
\]
Finite-field existence supplies such a modulus. Its $O(q)$
coefficients can be chosen as length-dependent nonuniform data;
finding them uniformly is unnecessary. Fix the order of the basis,
so every field element has one specified $q$-bit representation.

A \emph{pool} $Q$ consists of $m$ distinct source descriptions of
length $\ell$ with at most $q$ witness variables. Sort them
lexicographically and let $E_1,\ldots,E_m$ be their normalized
predicates. Distinct descriptions may compute the same function or
have the same unique witness. A pool is \emph{all-YES} if every
member is a YES instance, and \emph{one-NO} if exactly one member is
a NO instance and the others are YES instances. In the argument, one
member of the pool is \emph{designated} as the instance under test,
and the witnesses of the other $m-1$ members are known. The compiler
of Section~\ref{sec:compiler} receives only the pool and fixed
length-dependent data. It receives neither the witnesses nor the
designation.

For input length $L$, regard the isolator as a Boolean circuit with
its length-$L$ advice incorporated and a fixed polynomial random-bit
budget. Unused random bits are ignored. Variable output row counts
can be represented by a count and a padded coefficient array. Thus a
particular input description and seed determine exactly one explicit
filter. As in Section~\ref{sec:result}, the isolator has resource
degree $a$ if its circuit size, advice, seed length, and output
length are at most $(L+2)^a$.

We do not assume that the isolator treats two descriptions of the
same function alike. For this reason the compiler will produce, for
all pools at the same $(\ell,q,k)$, descriptions of one common length
$L_{\ell,q,k}=(n+\ell+2)^{O(1)}$, with an exponent independent of
$k$. We write $L$ or $L_\ell$ for this length once the schedule for
$q$ and $k$ has been fixed.

\begin{table}[ht]
\centering
\renewcommand{\arraystretch}{1.15}
\begin{tabular}{@{}lp{0.74\textwidth}@{}}
\hline
Symbol & Meaning\\
\hline
$\ell$ & description length of a source instance\\
$q$ & number of witness variables of a source, $1\le q\le\ell$;
      the field is $\K=\F_{2^q}$\\
$k$ & exponent parameter, with $\alpha=k/(k+1)$\\
$m=(q+1)^k$ & number of sources in a pool, and number of cells\\
$d=m-1$ & number of tag bits\\
$n=d+mq$ & witness arity of the compiled circuit, equal to
           $(q+1)^{k+1}-1$\\
$D=\lceil d/q\rceil$ & largest degree of a residual polynomial\\
$L$ & description length of the compiled circuit\\
$N=n+\ell+2$ & size parameter: all resources are polynomial in $N$\\
$a$, $\kappa(a)$ & resource degree of the isolator, and the exponent
                   of Theorem~\ref{thm:resource}\\
$s$, $v$ & description length and number of variables of a
           satisfiability instance (Sections~\ref{sec:VV}--\ref{sec:converse})\\
\hline
\end{tabular}
\caption{Parameters. The quantities $q,k,m,d,n$ are tied by
\eqref{eq:parameters}.}
\label{tab:parameters}
\end{table}

\section{The flag partition}\label{sec:flag}

The cells of the gadget come from a complete flag of subspaces of
$\F_2^d$. Lemma~\ref{lem:flag} is the only property of the cells used
in the proof of Theorem~\ref{thm:main}.

For $d\ge1$, put $V=\F_2^d$ and
\[
 F_0=V,\qquad F_j=\{t:t_1=\cdots=t_j=0\}
 \quad(1\le j\le d).
\]
Define $m=d+1$ cells by
\begin{align}
 K_j&=F_{j-1}\setminus F_j
     =\{t:t_1=\cdots=t_{j-1}=0,\ t_j=1\},
       &&1\le j\le d,\label{eq:cells}\\
 K_m&=F_d=\{0\}.\notag
\end{align}
Every nonzero vector has a unique first nonzero coordinate, so these
are a disjoint partition into nonempty proper affine sets.
Their codimensions are $1,2,\ldots,d,d$.
For any indexed sequence $\Fam=(K_1,\ldots,K_m)$ of regions, write
\begin{equation}\label{eq:f}
 f_{\Fam}(T)=\frac1m\sum_{i=1}^m\one\{|T\cap K_i|=1\}.
\end{equation}

\begin{lemma}[Two singleton intersections]\label{lem:flag}
For the partition \eqref{eq:cells}, every affine $T\subseteq V$,
including the empty set, meets at most two cells in exactly one
point. Consequently
\[
 \max_{T\text{ affine}} f_{\Fam}(T)=\frac2m.
\]
The maximum over singleton $T$ is $1/m$, and the maximum over
nonsingleton affine $T$ is $2/m$.
\end{lemma}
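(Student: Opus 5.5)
The plan is to split on whether $T$ is empty, a singleton, or a nonsingleton affine set. The first two cases are immediate. If $T=\emptyset$, no cell is met. If $T$ is a singleton, it lies in exactly one cell, since the cells partition $V$, and meets that cell in one point. Thus singletons give at most $1/m$, and the value $1/m$ is attained, for example by $T=\{0\}$, which meets $K_m$.

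The main case is a nonempty affine $T=u+U$ with $\dim U=r\ge1$. For each $j$, consider the restriction map $\pi_j:T\to\F_2^{j}$ to the first $j$ coordinates; its image is an affine set. We have $T\cap F_j=\pi_j^{-1}(0)$, and $T\cap F_{j-1}$ is the preimage of $0$ under $\pi_{j-1}$. The set $T\cap F_{j-1}$ is affine (possibly empty), and $T\cap K_j$ is the part of it where $t_j=1$. If $T\cap F_{j-1}=S$ is nonempty affine, then $t_j$ restricted to $S$ is either constant or takes each value on exactly half of $S$. Hence $|T\cap K_j|=1$ can occur only in two situations: $|S|=2$ with $t_j$ nonconstant on $S$, or $|S|=1$ with $t_j=1$ on it. The same reasoning with $S=T\cap F_d$ applies to $K_m$: $|T\cap K_m|=1$ exactly when $0\in T$.

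The key observation is that the sets $S_j=T\cap F_{j-1}$ form a decreasing chain of affine sets, and sizes drop by a factor of two exactly at the steps where $t_j$ is nonconstant on $S_j$. Write $j^*$ for the first index at which $|S_j|\le 2$, with $S_{d+1}:=T\cap F_d$.

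\textbf{Case $|S_{j^*}|=2$.} Then $t_{j^*}$ is either nonconstant on $S_{j^*}$, giving one singleton hit at $K_{j^*}$, after which $S_{j^*+1}$ is a single point that is eventually caught by exactly one later cell (a second hit); or $t_{j^*}$ is constant on $S_{j^*}$, in which case no hit occurs at $j^*$ and we continue. In every branch, once the chain reaches size $\le 2$, at most one "size-2 split" hit and one "size-1 terminal" hit can occur. Before $j^*$, every $S_j$ has size at least $4$, so by the halving property no hits occur there. Cells with $S_j=\emptyset$ contribute nothing. This gives at most two hits.

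To show the value $2/m$ is attained by a nonsingleton $T$, take $T=\{0,e_d\}$, which meets $K_d$ in $e_d$ and $K_m$ in $0$. So the maximum over nonsingletons is $2/m$, and overall $\max f_{\Fam}=2/m$.

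The main obstacle is the bookkeeping in the nested-chain argument: stating precisely that a hit at a cell with $|S_j|=2$ forces the next set to be a single point, and that a singleton chain produces exactly one further hit and then stops (all subsequent $S$ are empty). I would write this as a short induction on $j$, tracking $|S_j|\in\{0,1,2,\ge4\}$.
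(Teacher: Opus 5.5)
Your proposal is correct and follows essentially the paper's proof: both track the chain $T\cap F_j$, use the trichotomy that $t_j$ is constant zero, constant one, or splits it into equal fibers, and conclude that a singleton hit leaves at most one point of $T$ for the later cells, so at most one further hit. Your size bookkeeping in $\{0,1,2,\ge4\}$ is a more explicit version of that step, and your set $\{0,e_d\}$ is a special case of the paper's two-point line through points of two different cells.
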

\begin{proof}
The empty case is immediate. Let $R_j=T\cap F_j$.
On a nonempty affine $R_{j-1}$, the coordinate $t_j$ is constant
zero, constant one, or nonconstant. In the first case
$T\cap K_j$ is empty. In the second, $T\cap K_j=R_{j-1}$ and
$R_j$ is empty. In the third, the two fibers have equal size:
translation by a direction vector on which $t_j=1$ bijects them.
Hence $|T\cap K_j|=|R_j|=|R_{j-1}|/2$.

At the first singleton intersection with a cell $K_j$, $j\le d$,
these alternatives imply that $R_j$ is empty or a singleton.
The later cells are disjoint and together contain exactly $R_j$
from $T$, so at most one later cell has a singleton intersection.
If none of the first $d$ cells has one, only $K_m$ can contribute.
This proves the upper bound two.

Choose points in two different cells. Their two-point affine line
meets exactly these cells, each once, proving attainment by a
nonsingleton set. A singleton belongs to precisely one cell.
\end{proof}

The proof uses only the successive equal-fiber property. It also
applies to a truncated flag: peel $m-1$ cells and use the residual
space as the last cell, for any $2\le m\le d+1$.

\section{Deterministic root listing over a binary field}\label{sec:roots}

The next lemma is the standard trace-splitting root algorithm,
specialized to $\F_2$-extensions; see Berlekamp
\cite[Section~5, especially equation~(5.02)]{Ber70}. We give a full
proof and fixed-layout circuit realization because the distinction
between computed candidates and witness variables is central here.
Its runtime is jointly polynomial in the field's bit length and
the formal degree bound.

\begin{lemma}[A fixed-layout root-list circuit]\label{lem:roots}
Given the fixed representation of $\K=\F_{2^q}$, all distinct roots
in $\K$ of a nonzero polynomial $f\in\K[W]$ of degree at most $D$
can be listed deterministically in time polynomial in $q$ and $D$.
There is a Boolean circuit of size polynomial in $q$ and $D$ whose
input is the coefficient vector and whose output has $D$ field-element
slots and $D$ validity bits. The valid slots contain each root exactly
once, in a fixed deterministic order.
\end{lemma}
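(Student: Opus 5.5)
We follow Berlekamp's trace-splitting method. The plan has four stages: reduce $f$ to the product of its distinct linear factors, split that product deterministically using traces of the basis elements, flatten the recursion into a circuit of fixed layout, and finally order the output.

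\emph{Reduction.} Let $f$ have degree at most $D$. We first compute
\[
 g=\gcd\bigl(f,\;W^{2^q}-W\bigr).
\]
We form $W^{2^q}\bmod f$ by $q$ repeated squarings modulo $f$, then take a gcd by the Euclidean algorithm. Every step is a polynomial operation over $\K$ of degree at most $D$, so this costs $\mathrm{poly}(q,D)$ field operations, and each field operation costs $\mathrm{poly}(q)$ bit operations in the fixed polynomial basis. The result $g$ is the product of $W-\rho$ over the distinct roots $\rho\in\K$ of $f$. It is squarefree and splits into linear factors, and $\deg g\le D$.

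\emph{Splitting.} Let $\Tr(W)=\sum_{i=0}^{q-1}W^{2^i}$ be the absolute trace to $\F_2$. For $\beta\in\K$ we have
\[
 W^{2^q}-W=\prod_{c\in\F_2}\bigl(\Tr(\beta W)-c\bigr),
\]
so $g=\gcd(g,\Tr(\beta W))\cdot\gcd(g,\Tr(\beta W)-1)$. We use the basis elements $\beta=X^0,\dots,X^{q-1}$. Given two distinct roots $\rho\ne\rho'$, the map $\beta\mapsto\Tr(\beta(\rho-\rho'))$ is a nonzero $\F_2$-linear functional, because the trace form is nondegenerate. It therefore cannot vanish on the whole basis, so some basis element $\beta$ separates $\rho$ from $\rho'$. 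This is what makes the method deterministic. We refine the factorization successively by $\beta=X^0,\dots,X^{q-1}$. After all $q$ refinements, any factor of degree at least two would contain two roots that no $\beta$ separates, a contradiction. So every factor is linear, and we read off each root from its factor $W-\rho$. Each refinement needs $\Tr(\beta W)\bmod h$ for each current factor $h$, computed by repeated squaring, followed by gcds. There are at most $D$ factors at any time and $q$ rounds, so the whole procedure runs in time $\mathrm{poly}(q,D)$.

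\emph{Fixed-layout circuit.} This is the main obstacle, because the Euclidean algorithm and the factor lists have data-dependent degrees. We handle it by padding throughout. Polynomials are stored as coefficient arrays of length $D+1$ together with a degree indicator. The Euclidean algorithm is unrolled to exactly $2D+2$ steps, and each step carries a guard that turns it into the identity once the remainder is zero. Leading coefficients are inverted by computing $c^{2^q-2}$ through $O(q)$ multiplications, with $0$ mapped to $0$. Each gcd is then normalized to be monic. A factor list is stored as $D$ slots, each holding a padded polynomial and a validity bit. In each round, every slot is replaced by its two gcd parts, and the zero-degree or invalid parts are compacted back into $D$ slots by a fixed sorting network on the validity bits. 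This is legitimate because a squarefree product of degree at most $D$ has at most $D$ nontrivial factors.

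\emph{Output and order.} At the end, a valid slot holds a monic $W+\rho$ (in characteristic two), and we output $\rho$. Distinct factors give distinct roots, so each root appears exactly once. Because the circuit is data-oblivious, the output order is a fixed function of the input. To get a canonical order, we finish by sorting the valid roots lexicographically by their $q$-bit representations with a sorting network. The total size is $\mathrm{poly}(q,D)$.
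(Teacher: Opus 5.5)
Your proposal is correct and follows the paper's proof essentially step for step: the gcd with $W^{2^q}-W$ computed by modular squaring, splitting by the traces $\Tr(\beta_j W)$ over a fixed basis (you invoke nondegeneracy of the trace form where the paper proves the separation directly), guarded fixed-layout unrolling with $D$ compacted factor slots, and a final sort. One small slip is harmless: the displayed identity is really $\Tr(\beta W)\bigl(\Tr(\beta W)-1\bigr)=\beta\,(W^{2^q}-W)$, so it holds only up to the nonzero scalar $\beta$ (and fails for $\beta=0$), which does not affect the gcd splitting because you only use basis elements.
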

\begin{proof}
We describe the algebraic algorithm and then its Boolean realization.
A nonzero constant polynomial has no roots, so suppose $\deg f\ge1$.

\paragraph{Extract the field-root factor.}
Compute the monic polynomial
\begin{equation}\label{eq:rootgcd}
 g(W)=\gcd\bigl(f(W),\,W^{2^q}-W\bigr).
\end{equation}
The exponentially large second polynomial need not be represented.
Start with $W\bmod f$, square modulo $f$ exactly $q$ times, and
subtract $W\bmod f$. This gives the remainder of $W^{2^q}-W$
modulo $f$, to which the Euclidean algorithm may be applied.
All stored remainders have degree below $D$.

Every element of $\K$ satisfies $w^{2^q}=w$. The monic polynomial
$W^{2^q}-W$ has degree $2^q$ and derivative $-1=1$, and is
therefore the product of the $2^q$ distinct linear factors
$W-w$, $w\in\K$. It follows that $g$ is exactly the squarefree
product of the linear factors of $f$ over $\K$. If $g=1$, return
the empty list. Otherwise $1\le\deg g\le D$.

\paragraph{Separate the roots by trace coordinates.}
Fix an $\F_2$-basis $\beta_1,\ldots,\beta_q$ of $\K$ and define
\[
 \Tr(u)=u+u^2+\cdots+u^{2^{q-1}}.
\]
This is $\F_2$-linear and takes values in $\F_2$, since squaring
permutes the displayed terms using $u^{2^q}=u$.
It is not the zero function: the displayed nonzero polynomial has
degree $2^{q-1}<2^q$ and cannot vanish at every field element.
For $v\ne0$, multiplication by $v$ permutes $\K$, so some $\beta$
satisfies $\Tr(\beta v)=1$. Expressing $\beta$ in the fixed basis
shows that $\Tr(\beta_jv)=1$ for some $j$.
Consequently the vectors
\begin{equation}\label{eq:tracecoordinates}
 \bigl(\Tr(\beta_1w),\ldots,\Tr(\beta_qw)\bigr)
\end{equation}
are distinct for distinct $w\in\K$.

Maintain a list of pairwise coprime monic nonconstant factors whose
product is $g$, initially the one-element list $(g)$.
For $j=1,\ldots,q$, compute
\[
 T_j(W)=\sum_{a=0}^{q-1}(\beta_jW)^{2^a}\pmod g
\]
by repeated modular squaring and addition. For every current factor
$h$, replace it by the nonconstant members of
\begin{equation}\label{eq:tracesplit}
 h_0=\gcd(h,T_j),\qquad h_1=h/h_0.
\end{equation}
The division is exact. As $h$ is squarefree and splits over $\K$,
$h_0$ consists precisely of its roots with $j$th trace coordinate
zero, and $h_1$ of its roots with that coordinate one. Reducing
$T_j$ modulo $g$ does not alter its value at any root of $h$.
At each stage there are at most $D$ nonconstant factors.
After all $q$ stages, \eqref{eq:tracecoordinates} implies that every
factor has at most one root, and hence has degree one. Read the
root from each factor. In characteristic two the root of
$c_1W+c_0$ is $c_0/c_1$; with monic factors it is simply $c_0$.
Sort the resulting $q$-bit representations to fix an output order.

\paragraph{Bound the computation and unroll it.}
Field addition is bitwise addition modulo two. Polynomial-basis
multiplication followed by reduction modulo $M_q$ uses polynomially
many bit operations in $q$. A nonzero field element has inverse
$u^{2^q-2}$, computable with $O(q)$ squarings and multiplications.
Polynomial addition, multiplication, long division, monic
normalization, and Euclidean gcd use polynomially many such
operations in $D$. Intermediate products have degree at most $2D$;
modular remainders and factors have degree at most $D$.
There are $q$ splitting stages with at most $D$ factors each,
and the trace residues use at most $q^2$ modular squarings.
These bounds are polynomial in $q,D$.

For a literal fixed-layout Boolean circuit, reserve $D$ factor
slots, validity flags, and $D+1$ coefficient slots per factor.
Represent degrees explicitly. Bound all arithmetic loops by their
polynomial worst-case lengths, implement branches by multiplexers,
and compact and sort valid slots by a fixed bounded procedure.
Division steps are guarded; zero denominators on inactive branches
can be replaced by one. This is a total deterministic bounded
computation, so unrolling it gives a polynomial-size Boolean circuit.
All intermediate coefficients, factors, roots, and flags are
computed wires. They are not additional existential inputs.
\end{proof}

The lemma does not assume $D<2^q$. A nonzero polynomial that vanishes
on all of $\K$ has $2^q\le D$ roots, and the list still fits. Repeated
roots, a zero formal derivative, the root zero, polynomials of degree
below $D$, and empty root lists are all handled by
\eqref{eq:rootgcd}. The zero polynomial is excluded; the compiler
treats that case separately before it uses the root list.

\section{Residual polynomials and the affine graph compiler}\label{sec:compiler}

This section defines the circuit $H_Q$ of the overview. Each source
$E_i$ gets one block $x_i\in\K$, and the block is tied to the tag $t$
through a polynomial whose coefficients record which defining
equations of the cell $K_i$ the tag violates.

Let $K_i\subseteq V=\F_2^d$ be a nonempty proper affine region
with $r_i$ independent defining equations
\[
 \lambda_{i,a}(t)=b_{i,a}\qquad(1\le a\le r_i).
\]
Write the residual bits as
$u_{i,a}(t)=\lambda_{i,a}(t)+b_{i,a}$.
Group them in their fixed order into
$D_i=\lceil r_i/q\rceil$ consecutive blocks of at most $q$ bits,
padding the last block with zero bits. Regard these blocks as field
elements $a_{i,1}(t),\ldots,a_{i,D_i}(t)$ and set
\begin{equation}\label{eq:residualpoly}
 P_i(t;W)=\sum_{h=1}^{D_i}a_{i,h}(t)W^h.
\end{equation}
Every coefficient is a binary affine function of $t$.
There is no constant coefficient. As a map from the coefficient
vector to an evaluation at $W=w$, \eqref{eq:residualpoly} is the
standard positive-degree polynomial hash \cite[Section~5, p.~9]{NR}.
Here its coefficients encode affine residuals and its evaluation
point is a candidate source witness.

\begin{lemma}[Residual encoding]\label{lem:residual}
The formal polynomial $P_i(t;W)$ is zero exactly when $t\in K_i$.
For $t\notin K_i$ it is formally nonconstant of degree at most $D_i$.
For every fixed $w\in\K$, the map $t\mapsto P_i(t;w)$ is
binary affine. For $t\notin K_i$ and any $x\in\K$, all solutions
of $P_i(t;w)=x$ can be listed by Lemma~\ref{lem:roots}.
\end{lemma}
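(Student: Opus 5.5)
The plan is to read each of the four assertions of Lemma~\ref{lem:residual} directly off the construction \eqref{eq:residualpoly}. The key point is that the coefficient blocks of $P_i(t;W)$ are obtained by grouping the residual bits $u_{i,a}(t)$, and that the chosen polynomial basis identifies $\K$ with $\F_2^q$ coordinatewise.

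\emph{Zero exactly on $K_i$.} The formal polynomial $P_i(t;W)$ has coefficients $a_{i,1}(t),\ldots,a_{i,D_i}(t)$ in degrees $1,\ldots,D_i$ and none in degree $0$. So it is the zero polynomial iff every $a_{i,h}(t)=0$ in $\K$. Since each field element has a unique $q$-bit representation, this holds iff every bit of every block vanishes. The padding bits are identically zero, so the condition is that all residual bits $u_{i,a}(t)=\lambda_{i,a}(t)+b_{i,a}$ vanish. That says exactly that $t$ satisfies all defining equations of $K_i$, i.e.\ $t\in K_i$.

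\emph{Nonconstant off $K_i$.} If $t\notin K_i$, some coefficient $a_{i,h}(t)$ with $h\ge1$ is nonzero. Hence $P_i(t;W)$ has a nonzero positive-degree term and is nonconstant. Its degree is at most $D_i$ by construction.

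\emph{Affine in $t$ for fixed $w$.} Each residual bit is a binary affine function of $t$. The $q$-bit representation of $a_{i,h}(t)$ is therefore an affine map $\F_2^d\to\F_2^q=\K$, and so $a_{i,h}(t)=c_h+\phi_h(t)$ with $\phi_h$ additive. For fixed $w$, the map $y\mapsto yw^h$ is $\F_2$-linear on $\K$, because multiplication by a fixed element distributes over addition. Thus $P_i(t;w)=\sum_h c_hw^h+\sum_h\phi_h(t)w^h$ is a constant plus an $\F_2$-linear function of $t$, hence affine. This is the step that requires a little care: the affinity is over $\F_2$, not over $\K$, and it uses only additivity of field multiplication in the $\F_2$-coordinates.

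\emph{Root listing.} For $t\notin K_i$ and any $x\in\K$, consider $f(W)=P_i(t;W)-x$. Since $P_i(t;W)$ has no constant term and is nonconstant, $f$ has the same nonzero positive-degree coefficients. So $f$ is a nonzero polynomial of degree at most $D_i$ (and $D_i\le D$, since $r_i\le d$). Its root set is exactly $\{w: P_i(t;w)=x\}$. Lemma~\ref{lem:roots} then lists these roots by a fixed-layout circuit of size polynomial in $q$ and $D$. The only thing to check is that $f\neq0$, which the absence of a constant term guarantees even when $x\ne0$. No real obstacle arises here; the most delicate point is the $\F_2$-linearity remark above.
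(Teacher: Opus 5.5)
Your proposal is correct and follows the paper's proof essentially step for step. In both, vanishing of all packed coefficients is equivalent to vanishing of all residual bits, the missing constant term keeps $P_i(t;W)-x$ nonzero and nonconstant off $K_i$, affinity in $t$ comes from $\F_2$-linearity of multiplication by each fixed $w^h$, and the root-list lemma is applied to $P_i(t;W)-x$.
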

\begin{proof}
All packed coefficients vanish exactly when every residual bit is
zero, which is equivalent to membership in $K_i$. Otherwise at
least one positive-degree coefficient is nonzero; subtracting $x$
changes only the constant coefficient, so $P_i(t;W)-x$ remains
nonzero and nonconstant. For fixed $w$, multiplication by each
field constant $w^h$ is binary linear. Summing these linear images
of the affine coefficient maps proves affinity in $t$.
The final assertion follows by applying the root-list lemma to
$P_i(t;W)-x$.
\end{proof}

Two features of \eqref{eq:residualpoly} matter. First, the compiler
needs $P_i(t;w)$ to be affine in the tag $t$, not in $w$. The map
$w\mapsto P_i(t;w)$ is in general neither affine nor injective.
Injectivity is replaced by the root list: if a source has a unique
witness, testing every listed candidate against the source predicate
singles out the one allowed block value, even when other candidates
collide with the witness. Second, the powers start at $W^1$. If the
residual were stored as a constant term, $P_i(t;W)-x$ would vanish
identically for one value of $x$, and the root list would say nothing
about the source.

\subsection{The compiled circuit}

Use ordered witness variables
\begin{equation}\label{eq:variables}
 z=(t,x_1,\ldots,x_m),\qquad t\in\F_2^d,\quad x_i\in\K.
\end{equation}
There are exactly $n=d+mq$ variables. Given a canonically sorted
pool $Q$, define the $i$th block test as follows:
\begin{enumerate}
\item If all residual coefficients are zero, accept precisely when
      $x_i=0$.
\item Otherwise list the roots of $P_i(t;W)-x_i$. Accept precisely
      when some valid root satisfies the normalized predicate $E_i$.
\end{enumerate}
Use $D_i$ root slots, validity bits, and $D_i$ copies of $E_i$ in
the second branch. On the inactive zero-residual branch the
root-list input may be replaced by the fixed polynomial $W$,
so the internal circuit is total on every input. Let $H_Q$ be the
conjunction of the $m$ block tests.

\begin{lemma}[All-YES and one-NO graph identities]\label{lem:graph}
If every source in $Q$ has a unique witness $w_i$, then
\begin{align}
 \Sol(H_Q)&=S_Q=\{\Phi_Q(t):t\in V\},\label{eq:graph}\\
 \Phi_Q(t)&=(t,P_1(t;w_1),\ldots,P_m(t;w_m)).\notag
\end{align}
This is a $d$-dimensional binary affine set, and tag projection
is an affine isomorphism $S_Q\to V$.

If $E_j$ is unsatisfiable and every other source has a unique witness,
then $\Sol(H_Q)$ is a nonempty affine set $R_j(Q)$ obtained by
restricting $t$ to $K_j$, completing the known blocks, and setting
block $j$ to zero. An origin and a full-column-rank basis of $R_j(Q)$
can be computed from $K_j$ and the other $m-1$ witnesses.
For an all-YES pool the same computation, without using $w_j$, gives
\begin{equation}\label{eq:regionidentity}
 R_j(Q)=S_Q\cap\{z:t\in K_j\}.
\end{equation}
\end{lemma}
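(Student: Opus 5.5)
The plan is to work block by block: determine, for each fixed tag $t$, the set of accepted values $x_i$, and then take the conjunction.

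\textbf{Case 1: every source has a unique witness.} Fix $t$ and $i$ and split on $t\in K_i$.
If $t\in K_i$, Lemma~\ref{lem:residual} says all residual coefficients vanish. The block test then accepts exactly $x_i=0$, and $0=P_i(t;w_i)$ because $P_i(t;W)$ is the zero polynomial.
If $t\notin K_i$, the polynomial $P_i(t;W)-x_i$ is nonzero and nonconstant. Lemma~\ref{lem:roots} lists all of its roots, so the test accepts iff some root satisfies $E_i$. Since $w_i$ is the only satisfying assignment, this happens iff $w_i$ is a root, that is, iff $x_i=P_i(t;w_i)$.
In both subcases the accepted value is exactly $P_i(t;w_i)$. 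Hence $\Sol(H_Q)$ is the graph of $\Phi_Q$. Each coordinate $t\mapsto P_i(t;w_i)$ is affine by Lemma~\ref{lem:residual}, so $\Phi_Q$ is an injective affine map $V\to\F_2^n$. Its image $S_Q$ is therefore a $d$-dimensional affine set, and tag projection inverts $\Phi_Q$.

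\textbf{Case 2: $E_j$ is unsatisfiable, the other sources have unique witnesses.} Blocks $i\ne j$ behave exactly as in Case 1. For block $j$:
if $t\notin K_j$, the branch-2 test never accepts, because no root satisfies $E_j$;
if $t\in K_j$, branch 1 accepts exactly $x_j=0$.
Hence
\[
\Sol(H_Q)=\{(t,x):t\in K_j,\ x_j=0,\ x_i=P_i(t;w_i)\ (i\ne j)\}.
\]
This set is nonempty, since $K_j$ is a nonempty cell. It is the image of the affine set $K_j$ under an injective affine map, so it is affine. Call it $R_j(Q)$.

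\textbf{Computing an origin and basis of $R_j(Q)$.}
First take an origin $t_0$ of $K_j$ and a basis $e_1,\dots,e_r$ of its direction space, for instance by Gaussian elimination on the defining equations.
The origin of $R_j(Q)$ is $\Psi(t_0)$, where $\Psi(t)$ has tag $t$, block $j$ equal to zero, and block $i$ equal to $P_i(t;w_i)$ for $i\ne j$; only the known witnesses enter.
The basis vectors are $\Psi(t_0+e_s)-\Psi(t_0)$ for $s=1,\dots,r$. Their tag parts are the independent vectors $e_s$, so the basis matrix has full column rank.

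\textbf{The identity \eqref{eq:regionidentity}.} Apply the same formula to an all-YES pool; it still does not use $w_j$. For $t\in K_j$ we have $P_j(t;w_j)=0$ whatever $w_j$ is, because $P_j(t;W)$ is formally zero there. So $\Psi(t)=\Phi_Q(t)$ on $K_j$, and the set $\Psi(K_j)$ equals $S_Q\cap\{z:t\in K_j\}$.

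\textbf{Main obstacle.} The delicate point is the exact characterization of block acceptance in the nonzero branch. It must not matter that other roots collide or that $w\mapsto P_i(t;w)$ is not injective. The test only asks whether a root satisfies $E_i$, and uniqueness of the witness settles this. What is needed from Lemma~\ref{lem:roots} is that the listed roots are exactly all field roots, and Lemma~\ref{lem:residual} guarantees the listed polynomial is nonzero.
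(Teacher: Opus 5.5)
Your proposal is correct and follows the paper's proof essentially step by step: the same block-by-block case split on $t\in K_i$, the same use of the exact root list together with witness uniqueness, the same completion map giving an origin and basis for $R_j(Q)$, and the same observation that $P_j(t;W)$ vanishes formally on $K_j$. The one point you leave implicit, and the paper states, is that the $r$ completed differences span the direction space of $R_j(Q)$; this holds because the completion map is injective and affine on $K_j$, so $\dim R_j(Q)=r$.
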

\begin{proof}
For a unique source and $t\in K_i$, the block is forced to zero,
which equals $P_i(t;w_i)$. If $t\notin K_i$, the root list is exact.
A listed root satisfies $E_i$ if and only if it is $w_i$, so the
block accepts exactly $x_i=P_i(t;w_i)$. This proves
\eqref{eq:graph}. Each block is affine in $t$ by
Lemma~\ref{lem:residual}, and the visible tag coordinates make
$\Phi_Q$ injective and supply its inverse projection. Thus its image
is affine of dimension $d$, including when source witnesses coincide
or equal zero.

If $E_j$ is unsatisfiable, its nonzero-residual branch always rejects.
Its block accepts exactly for $t\in K_j$ and $x_j=0$.
The other blocks have the values already established, giving the
asserted graph over $K_j$. This graph is affine and nonempty
because $K_j$ is nonempty. On that cell $P_j(t;w)=0$ for every
$w$, so the same region occurs in an all-YES graph regardless of
its unknown $j$th witness. This proves \eqref{eq:regionidentity}.

To compute a basis, find $K_j=v+\im(B)$ by Gaussian elimination,
with $B$ full column rank. Complete $v$ by the known block values
and a zero $j$th block. For each column $u$ of $B$, complete $v+u$
in the same way and subtract the completed origin.
Affinity proves that these differences span the region's direction
space. Their tag projections are the independent columns of $B$,
so the completed columns remain independent. If $K_j$ is a
singleton, there are no basis columns and this procedure still
returns its one completed point.
\end{proof}

\subsection{Exact checking in a known region}

\begin{lemma}[Regional singleton checking]\label{lem:checking}
Let $R=v+\im(B)\subseteq\F_2^n$, where $B$ has $r$ independent
columns. Given an explicit affine filter $Az=b$, the intersection
$R\cap\{z:Az=b\}$ is a singleton exactly when
\begin{equation}\label{eq:rankcheck}
 ABu=b-Av
\end{equation}
is consistent and $\rank(AB)=r$. This condition is decidable in
polynomial time from the explicit data.
\end{lemma}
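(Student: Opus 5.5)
The plan is to parametrize the region and reduce the question to a statement about a linear system in the coordinates $u\in\F_2^r$. Since $B$ has $r$ independent columns, the map $u\mapsto v+Bu$ is a bijection from $\F_2^r$ onto $R$. A point $z=v+Bu$ lies in the filter exactly when $A(v+Bu)=b$, that is, when $ABu=b-Av$. Hence
\[
 \bigl|R\cap\{z:Az=b\}\bigr|=\bigl|\{u\in\F_2^r:ABu=b-Av\}\bigr|.
\]
The intersection is therefore a singleton exactly when the system \eqref{eq:rankcheck} in the unknown $u$ has exactly one solution.

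Next I will use the standard structure of solution sets of linear systems over $\F_2$. If \eqref{eq:rankcheck} is inconsistent, the solution set is empty. If it is consistent, the solution set is a coset $u_0+\ker(AB)$, which has $2^{\,r-\rank(AB)}$ elements. So there is exactly one solution if and only if the system is consistent and $\ker(AB)=0$. Since $AB$ has $r$ columns, $\ker(AB)=0$ is equivalent to $\rank(AB)=r$. This proves the stated equivalence.

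The degenerate cases need a separate check. When $r=0$, $R=\{v\}$, the condition $\rank(AB)=0$ holds trivially, and consistency reduces to $Av=b$. When the filter has no rows, $AB$ is the empty matrix, its rank is $0$, and the criterion says the intersection is a singleton exactly when $r=0$, which is correct. Both cases fit the general argument once empty matrices are read with the usual conventions.

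For decidability, I will compute $AB$ and $b-Av$ by matrix multiplication over $\F_2$. Gaussian elimination on the augmented matrix $[\,AB\mid b-Av\,]$ then gives both $\rank(AB)$ and consistency: the system is consistent iff the augmented rank equals $\rank(AB)$. All of this is polynomial in the dimensions, and these are bounded by the explicit output length of the filter together with $n$ and $r\le n$.

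There is no real obstacle here. The only point needing care is that $B$ has full column rank, which is what makes the parametrization injective. Lemma~\ref{lem:graph} supplies exactly such a basis for the regions $R_j(Q)$.
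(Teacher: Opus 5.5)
Your proposal is correct and follows the paper's proof essentially step for step. The bijection $u\mapsto v+Bu$ carries retained points to solutions of \eqref{eq:rankcheck}, a consistent system has $2^{r-\rank(AB)}$ solutions, Gaussian elimination on the augmented matrix decides consistency and rank, and the case $r=0$ is covered by the same formula (your check of the row-free filter is a harmless addition).
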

\begin{proof}
The map $u\mapsto v+Bu$ is a bijection from $\F_2^r$ to $R$.
The retained points therefore correspond bijectively to solutions
of \eqref{eq:rankcheck}. A consistent linear system on $r$ variables
has $2^{r-\rank(AB)}$ solutions. Gaussian elimination tests
consistency and rank. The same formula covers $r=0$: consistency
then means that the single point $v$ satisfies the filter.
\end{proof}

\section{Size of the compiled circuit}\label{sec:size}

We bound the size of $H_Q$ by a polynomial in $N=n+\ell+2$ whose
degree does not depend on $k$. This uniformity is what allows $k$ to
grow in Section~\ref{sec:growing}.

\begin{proposition}[One polynomial bound for all parameters]
\label{prop:size}
Let $\ell\ge1$, $1\le q\le\ell$, and $k\ge1$ be integers, and use
\eqref{eq:parameters}. Set $N=n+\ell+2$. There is an absolute
constant $C_0$ such that the compiler, its output gate count and
description length, and the known-region construction are bounded
by $N^{C_0}$. All pools at these parameters have one common padded
length $L_{\ell,q,k}\le N^{C_0}$, and all designations of one pool
give the identical description. The exponent $C_0$ is independent
of $\ell,q,k$. Witness arity is exactly $(q+1)^{k+1}-1$.
\end{proposition}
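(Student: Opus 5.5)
The proof is a gate count that tracks every quantity in terms of $N=n+\ell+2$. The basic bounds are immediate from \eqref{eq:parameters}: $q\le\ell\le N$, $m\le n+1\le N$, $d\le n\le N$, and each $D_i\le\lceil d/q\rceil\le d+1\le N$. The witness arity $n=d+mq=(q+1)^{k+1}-1$ is the definition rewritten, using $d=m-1$ and $m=(q+1)^k$. The key point is that $k$ enters only through $m$ and $d$, both at most $N$. No step iterates over $k$ or takes a power depending on $k$, so every exponent obtained is absolute.

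Next we bound the components of $H_Q$ block by block. The tag residual bits $u_{i,a}(t)$ for the flag cells \eqref{eq:cells} are single coordinates or complements of coordinates, so packing them into $D_i$ field coefficients costs $O(d)$ wires per block. The zero-residual test is an OR of at most $d$ bits. Forming $P_i(t;W)-x_i$ and multiplexing in the fixed polynomial $W$ on the inactive branch costs $O(D_iq)$ gates. The root lister of Lemma~\ref{lem:roots} has size $\mathrm{poly}(q,D_i)\le N^{c_1}$, where $c_1$ is the absolute exponent from that lemma. The normalized source $E_i$ has $O(\ell)+O(q)$ gates, and it is copied $D_i$ times, once per root slot, giving $O(N^2)$ gates. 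Finally, the validity-gated OR and the conjunction over $m$ blocks contribute a factor $m\le N$. Altogether the gate count is at most $N^{c_1+2}$ up to a constant. The hardwired modulus $M_q$ and the field basis add $O(q)$ bits. In the indexed encoding a description has length $O(\text{gates}\cdot\log\text{gates})$, which is at most $N^{c_1+4}$.

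We then handle the uniformity claims. The layout of $H_Q$ depends only on $(\ell,q,k)$ and the fixed data; the pool enters only as the gate contents of the $E_i$ copies, each normalized and padded to the same shape. So all pools at these parameters have descriptions of the same length, up to bit-level variation in gate indices. We then pad every description, using the ignored padding field, to $L_{\ell,q,k}:=N^{C_0}$. This length is computable from the parameters, and we take $C_0$ to be an absolute constant exceeding every exponent above. The compiler never receives a designation or any witnesses, so all designations of one pool give literally the same string. The compiler itself sorts the pool, normalizes each source, and writes out the fixed template, all in time polynomial in the output size.

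The known-region construction of Lemma~\ref{lem:graph} computes a basis of $K_j$ by Gaussian elimination on at most $d\times d$ data. It completes at most $d+1$ points, and each completion evaluates $m$ blocks $P_i(t;w_i)$ with $D_i$ field multiplications apiece. This costs $\mathrm{poly}(N)$ with an absolute exponent.

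The main obstacle is making the exponent in Lemma~\ref{lem:roots} explicit and uniform, so that $C_0$ is genuinely independent of $k$, and making sure that unrolling the fixed-layout circuit and padding the encoding introduce no hidden dependence on $\ell$, $q$, or $k$. I would handle this by choosing the worst-case loop bounds in that proof as fixed polynomials, such as $O(qD)$ squarings and $O(D^2)$ Euclidean steps, each costing $O(q^2)$ bit operations, and recording the resulting concrete degree.
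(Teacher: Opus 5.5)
Your proposal is correct and follows essentially the same route as the paper. Both bound $m,d,q,D,\ell$ by $N$ and use the fixed polynomial bound of Lemma~\ref{lem:roots} in $q$ and $D$, with no loop depending on $k$. Both charge $D$ copies of each $O(\ell+q)$-gate normalized source per block, absorb the gate-index logarithm, and pad to one common length from a compiler that sees only the sorted pool, never the designation or the witnesses. The paper handles the root-list exponent just as briefly, asserting only that its loop bounds are fixed polynomials in $q$ and $D$. Your closing bookkeeping is therefore optional refinement rather than a missing step, and its precise operation counts do not matter.
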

\begin{proof}
For the flag regions $r_i\le d=m-1$, so
\begin{equation}\label{eq:degree}
 D=\max_iD_i=\left\lceil\frac d q\right\rceil\le d\le n.
\end{equation}
Also $m,d,q\le n+1\le N$. Computing all residuals and packed
coefficients is polynomial in $m,d,q$. The root-list algorithm in
Lemma~\ref{lem:roots} has one fixed polynomial bound in $D,q$;
none of its loop bounds depends on $k$. Each block uses at most
$D$ copies of an $O(\ell+q)$-gate normalized source predicate.
Thus the total gate count is bounded by one fixed polynomial in
$m,d,q,D,\ell$, and therefore by $N^{C_0}$ for an absolute
constant after enlarging it. Gate-index encoding adds only a
logarithmic factor, which is absorbed by the same choice of $C_0$.
Gaussian elimination and evaluation of the known block maps give
the same kind of fixed polynomial bound for region recovery.
All roots and intermediate algebraic data remain computed wires;
the only witness inputs are those in \eqref{eq:variables}.

Fix, once per parameter triple, the field modulus, basis, flag
equations, grouping order, normalization procedure, slot order,
root-list circuit, wire order, and syntax of all operations.
Sort the source descriptions before placing their predicates into
blocks. Pad every output to one prescribed length bounded by the
fixed polynomial above. The compiled bytes depend on the pool
and these parameter-dependent choices alone. Neither the
unknown-member designation nor any source witness is an input
to the compiler. Witnesses enter only the separate known-region
construction in Lemma~\ref{lem:graph}.
\end{proof}

With $q=\ell$ and $k$ fixed, $N$ is polynomial in the source length
$\ell$. When $k$ grows, the exponent $C_0$ stays the same, and the
additional cost appears only through the larger arity $n$.

For $k=1$ we have $d=q$, $m=q+1$, and $D=1$. Each residual polynomial
is then a single field multiplication of the source witness, and the
threshold $2/m$ is about $2/\sqrt n$. For $k=2$,
\[
 d=q^2+2q,\qquad m=(q+1)^2,\qquad D=q+2,
 \qquad n=(q+1)^3-1,
\]
so the first improvement of the exponent already needs root lists of
length linear in $q$. Larger $k$ increases the verification cost and
keeps one $q$-bit witness block per source.

The canonical description matters because the isolator may depend on
syntax: it could output unrelated filters on two encodings of the
same function. In the argument, all designations of a pool see
literally the same $H_Q$, the same length-$L_\ell$ advice, and later
the same seed list. No invariance under equivalent descriptions is
needed.

\section{Fixing the seeds and averaging over designations}\label{sec:seeds}

We first replace the isolator's randomness by one fixed list of seeds
that works for all inputs of a given length. We then bound, for an
all-YES pool, how often the resulting filters can isolate inside the
regions $R_j(Q)$ on average over $j$.

\begin{lemma}[Simultaneous seed fixing]\label{lem:seeds}
Fix an input description length $L$ and a randomized transformation
with random-bit budget $r(L)$. On each of any collection of at most
$2^L$ descriptions, let a seed define a Boolean success event with
probability $\mu_H$. For $\gamma>0$, there is a list of
\[
 J=O\bigl((L+1)\gamma^{-2}\bigr)
\]
seeds for which every description in that collection has empirical
success frequency at least $\mu_H-\gamma$. If $L$, $r(L)$, and
$\gamma^{-1}$ are polynomial in a parameter, so is the list's total
bit length. Repeated seeds are allowed.
\end{lemma}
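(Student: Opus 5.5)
We will use the probabilistic method: draw $J$ seeds independently and uniformly from $\{0,1\}^{r(L)}$, apply a Hoeffding bound to each description separately, and finish with a union bound over the at most $2^L$ descriptions in the collection. Since a positive-probability event exists, some fixed list of seeds has the required property.

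\textbf{Single description.} Fix one description $H$ in the collection. Its success event is a Boolean function of the seed with mean $\mu_H$. Draw seeds $\sigma_1,\ldots,\sigma_J$ independently, and let $X_1,\ldots,X_J\in\{0,1\}$ be the indicators of success on them. These are i.i.d.\ with mean $\mu_H$. By Hoeffding's inequality,
\[
 \Pr\Bigl[\tfrac1J\textstyle\sum_{i}X_i<\mu_H-\gamma\Bigr]\le e^{-2\gamma^2J}.
\]
Only this lower-tail bound is needed.

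\textbf{Union bound.} Sum over the at most $2^L$ descriptions. The probability that some description has empirical frequency below $\mu_H-\gamma$ is at most $2^Le^{-2\gamma^2J}$. This is below one as soon as
\[
 J>\frac{L\ln2}{2\gamma^2}.
\]
So we may take $J=\lceil (L+1)\gamma^{-2}\rceil$, which is $O((L+1)\gamma^{-2})$. With this $J$, some realization of the seed list makes every description's empirical frequency at least $\mu_H-\gamma$. Seeds are drawn with replacement, so repeated seeds are allowed. If $\gamma\ge1$ the claim is trivial, because frequencies are nonnegative and $\mu_H-\gamma\le0$.

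\textbf{Size of the list.} The total bit length of the list is $J\cdot r(L)=O((L+1)\gamma^{-2}r(L))$. If $L$, $r(L)$ and $\gamma^{-1}$ are polynomial in some parameter, this product is polynomial in that parameter.

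\textbf{Main point.} The step needing care is that the union bound runs over the collection of descriptions of a single fixed length $L$ (at most $2^L$ strings). This is exactly why the compiler in Proposition~\ref{prop:size} pads every pool to a common length. The list is then fixed once and can be supplied as nonuniform advice.
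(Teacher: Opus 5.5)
Your proposal is correct and follows the paper's proof: a lower-tail bound $e^{-2J\gamma^2}$ for each description, then a union bound over at most $2^L$ descriptions, which fixes a list of $J=O((L+1)\gamma^{-2})$ seeds of total length $J\,r(L)$. The only difference is that you cite Hoeffding's inequality, whereas the paper derives the estimate itself from the cumulant bound $\phi(t)\le t^2/8$ with $t=4\gamma$ to stay self-contained.
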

\begin{proof}
For a fixed description, draw $J$ independent uniform seeds and
let $X_a\in\{0,1\}$ be the corresponding success bits, with mean
$\mu$. We include the exponential estimate. Put
$\phi(t)=\ln\mathbb E e^{t(\mu-X_a)}$.
Then $\phi(0)=\phi'(0)=0$, and $\phi''(t)$ is the variance of a
Bernoulli random variable under the tilted distribution, at most
$1/4$. Integrating twice gives $\phi(t)\le t^2/8$ for $t\ge0$.
Independence and Markov's inequality imply
\[
 \Pr\left[\frac1J\sum_aX_a<\mu-\gamma\right]
 \le \exp\left(-tJ\gamma+\frac{Jt^2}{8}\right).
\]
Taking $t=4\gamma$ gives $e^{-2J\gamma^2}$.
A union bound over at most $2^L$ descriptions is less than one if
$J>(L+1)\ln(2)/(2\gamma^2)$. Some seed list therefore has the
asserted simultaneous property. It occupies $Jr(L)$ bits.
\end{proof}

The lemma is an existence statement for nonuniform advice. It does
not require an efficient test of the success event on the original
input, or an efficient search for the seed list.

For a fixed list of seeds $\sigma_1,\ldots,\sigma_J$, let
$F_{Q,a}$ be the explicit filter produced on $H_Q$ with seed
$\sigma_a$. Define the empirical regional frequency
\begin{equation}\label{eq:regionalfrequency}
 a_j(Q)=\frac1J\sum_{a=1}^J
        \one\{|R_j(Q)\cap F_{Q,a}|=1\}.
\end{equation}
When all witnesses except the $j$th are known, this rational number
is exactly computable using Lemma~\ref{lem:checking}.

\begin{lemma}[Coupling across designations]\label{lem:coupling}
For an all-YES pool $Q$, any common fixed seed list satisfies
\begin{equation}\label{eq:coupling}
 \frac1m\sum_{j=1}^m a_j(Q)\le\frac2m.
\end{equation}
\end{lemma}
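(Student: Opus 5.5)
The plan is to exchange the two averages and reduce each seed to a single application of Lemma~\ref{lem:flag}. Expanding \eqref{eq:regionalfrequency} gives
\[
 \frac1m\sum_{j=1}^m a_j(Q)=\frac1J\sum_{a=1}^J\frac1m\sum_{j=1}^m
 \one\{|R_j(Q)\cap F_{Q,a}|=1\}.
\]
It therefore suffices to show, for each fixed seed index $a$, that the inner average is at most $2/m$. The essential point is that the filter $F_{Q,a}$ depends on neither $j$ nor the designation. By Proposition~\ref{prop:size}, every designation of the pool sees literally the same description $H_Q$. With the common seed $\sigma_a$, the isolator therefore outputs one filter $F=F_{Q,a}$ that serves all $j$ at once.

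Fix $a$ and write $F=\{z:Az=b\}$. Since $Q$ is all-YES, Lemma~\ref{lem:graph} applies. The identity \eqref{eq:regionidentity} gives $R_j(Q)\cap F=(S_Q\cap F)\cap\{z:t\in K_j\}$. The set $S_Q\cap F$ is an intersection of two affine sets, so it is affine (possibly empty). Let $\pi$ be the tag projection and put $T=\pi(S_Q\cap F)\subseteq V$. Then $T$ is an affine subset of $V$ or empty, because it is the image of an affine set under an affine map. Lemma~\ref{lem:graph} says $\pi$ restricts to an affine isomorphism $S_Q\to V$. In particular $\pi$ is injective on $S_Q\cap F$, so it maps $(S_Q\cap F)\cap\{t\in K_j\}$ bijectively onto $T\cap K_j$. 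Hence $|R_j(Q)\cap F|=|T\cap K_j|$ for every $j$.

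With this identity, the inner average equals $f_{\Fam}(T)$ as defined in \eqref{eq:f}, and Lemma~\ref{lem:flag} bounds it by $2/m$. That lemma allows $T$ to be empty, and it covers the flag cells used by the compiler. Averaging over $a$ then gives \eqref{eq:coupling}.

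The main thing to check carefully is that the same $T$ really works for all $j$. This requires two facts: the compiled description and the seed list are independent of the designation, and the regions $R_j(Q)$ are, for an all-YES pool, exactly the slices of the single graph $S_Q$ given by \eqref{eq:regionidentity}, whatever the unknown witness is. Both facts are already established in Lemma~\ref{lem:graph} and Proposition~\ref{prop:size}. Given them, the remaining steps are routine affine bookkeeping.
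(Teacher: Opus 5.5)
Your proof is correct and follows essentially the same route as the paper's. For each common seed you project $S_Q\cap F_{Q,a}$ to the tag space and use the graph isomorphism to get $|R_j(Q)\cap F_{Q,a}|=|T_a\cap K_j|$ for all $j$ at once, then apply Lemma~\ref{lem:flag} and average over $j$ and over seeds; you only spell out the injectivity step and the designation-independence of the filter more explicitly than the paper does.
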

\begin{proof}
For one fixed seed, $S_Q\cap F_{Q,a}$ is affine or empty.
Its tag projection $T_a\subseteq V$ is affine or empty, and the
affine graph isomorphism gives, for every $j$,
\[
 |R_j(Q)\cap F_{Q,a}|=|T_a\cap K_j|.
\]
Lemma~\ref{lem:flag} bounds the number of singleton intersections
by two. Average first over $j$ and then over the common seeds.
\end{proof}

The bound \eqref{eq:coupling} uses the fact that, for each seed, one
filter is shared by all designations. This holds because the
description $H_Q$ does not depend on the designation. The bound would
not follow with designation-dependent descriptions, advice, or seed
lists.

\section{From the isolator to circuits for unique satisfiability}\label{sec:domination}

We now compare the isolator's success with the bound $2/m$ of
Lemma~\ref{lem:coupling}, build the testers of the overview, and
combine them into a decider for $\USAT(\ell,q)$.

Fix any parameter triple $(\ell,q,k)$ from \eqref{eq:parameters}, and let
$\alpha=k/(k+1)$. Since $m=(n+1)^\alpha$, the gap between the
hypothesized success and the regional bound is
\begin{align}
 \frac2{n^\alpha}-\frac2m
 &=2\bigl(n^{-\alpha}-(n+1)^{-\alpha}\bigr)\notag\\
 &=2\alpha\int_n^{n+1}x^{-\alpha-1}\,dx
  >2\alpha(n+1)^{-\alpha-1}\notag\\
 &=\frac{2k}{k+1}(q+1)^{-(2k+1)}
  \ge(q+1)^{-(2k+1)}.\label{eq:gap}
\end{align}
The inequality is strict because the integrand is strictly decreasing.
The identity $m=(n+1)^\alpha$ is the reason for the choice $d=m-1$ in
\eqref{eq:parameters}: taking $d=q^k$ would give the same exponent
but not the exact constant $2$.

Set
\begin{equation}\label{eq:threshold}
 B_\ell=(q+1)^{2k+1},\qquad
 \gamma=\frac1{4B_\ell},\qquad
 \tau=\frac2m+\frac1{2B_\ell}.
\end{equation}
Apply Lemma~\ref{lem:seeds} to singleton success on every nonempty
affine circuit description of the common length $L_\ell$.
The bounds below show that this supplies one seed list of size
polynomial in $N=n+\ell+2$, with degree independent of $k$. In particular, for a
one-NO pool the entire satisfying set is $R_j(Q)$, a nonempty
affine set by Lemma~\ref{lem:graph}, so
\begin{equation}\label{eq:oneNO}
 a_j(Q)\ge\frac2{n^\alpha}-\gamma
       >\frac2m+\frac3{4B_\ell}>\tau.
\end{equation}
For every all-YES pool, Lemma~\ref{lem:coupling} gives an average
at most $2/m<\tau$. All comparisons with $\tau$ use rational numbers
of polynomial bit length; the testers never approximate
$n^{-\alpha}$ numerically.

We now state the combinatorial step in a form that will also apply
to explicit unions.

\begin{lemma}[Domination by sound testers]\label{lem:domination}
Suppose that, for every set $X$ of $m-1$ distinct YES source
descriptions and their witnesses, there is a tester
$T_X$ with the following properties:
\begin{enumerate}
\item $T_X$ rejects every NO source description;
\item for every all-YES pool $Q$ of size $m$, at least one
$y\in Q$ is accepted by $T_{Q\setminus\{y\}}$.
\end{enumerate}
There is a nonuniform promise decider using $O(m(\ell+1))$ such
testers and fewer than $2m$ hardwired YES descriptions. In particular,
if all resources are polynomial in a parameter $N\ge\ell+m$,
then so are the decider size and advice.
\end{lemma}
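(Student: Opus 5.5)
The plan is the greedy covering argument of \cite{DKMW}. Fix the source length $\ell$ and let $Y$ be the set of YES descriptions in $\USAT(\ell,q)$; since descriptions have length $\ell$, $|Y|\le2^\ell$. I will choose, one after another, sets $X_1,X_2,\ldots,X_R$ of $m-1$ distinct YES descriptions together with their witnesses. Each choice covers the YES instances its tester accepts. I stop once fewer than $2m$ YES instances remain uncovered, and I hardwire those. The decider accepts $y$ exactly when some $T_{X_i}$ accepts $y$ or $y$ equals one of the hardwired descriptions.

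\emph{Soundness} is immediate. By property~1 every tester rejects every NO description. A NO description is never equal to a hardwired YES description. Malformed inputs are rejected by the parser.

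\emph{Completeness} rests on one counting step. Let $U\subseteq Y$ be the current uncovered set, and suppose $|U|\ge2m$. Consider every $m$-subset $Q\subseteq U$; each is an all-YES pool. By property~2 each such $Q$ yields at least one pair $(X,y)$ with $X=Q\setminus\{y\}$, $y\in U\setminus X$, and $T_X(y)=1$. The map $Q\mapsto(X,y)$ is injective, because $Q=X\cup\{y\}$. So there are at least $\binom{|U|}{m}$ such pairs. On the other hand, there are only $\binom{|U|}{m-1}$ choices of an $(m-1)$-subset $X\subseteq U$. By averaging, some $X\subseteq U$ has its tester accepting at least
\[
 \binom{|U|}{m}\Big/\binom{|U|}{m-1}=\frac{|U|-m+1}{m}\ge\frac{|U|}{2m}
\]
elements of $U$. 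The last inequality uses $|U|\ge2m$. I take this $X$ as the next $X_i$ and hardwire its witnesses, which are needed by the tester. Removing the accepted elements multiplies $|U|$ by at most $1-1/(2m)$. Starting from $|U|\le2^\ell$, after $R$ rounds $|U|\le2^\ell e^{-R/(2m)}$. So $R=\lceil 2m\ell\ln2\rceil+1=O(m(\ell+1))$ rounds bring $|U|$ below $2m$, possibly earlier. If $|Y|<2m$ from the start, no testers are needed at all. The remaining fewer than $2m$ descriptions are hardwired and compared by equality.

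\emph{Resources.} The decider consists of $O(m(\ell+1))$ tester copies, each with the $m-1$ descriptions and witnesses of its $X_i$ as advice. It also contains fewer than $2m$ equality checks on $\ell$-bit strings, and one OR gate. If each tester's size and advice are polynomial in $N\ge\ell+m$, then the totals are polynomial in $N$, since $m(\ell+1)\le N^2$.

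The only delicate point is the averaging step: the injection from pools to accepting pairs, and the requirement $|U|\ge2m$ that makes the ratio $(|U|-m+1)/m$ at least $|U|/(2m)$. This requirement is why the last fewer than $2m$ instances are hardwired.
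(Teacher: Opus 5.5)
Your proposal is correct and follows the paper's proof essentially step for step. It uses the same injection from $m$-subsets of the uncovered set to accepted pairs $(X,y)$, the same averaging over the $\binom{M}{m-1}$ sets $X$ yielding $(M-m+1)/m\ge M/(2m)$, the same $O(m(\ell+1))$-round shrinkage from $2^\ell$ descriptions, the same hardwiring of the fewer than $2m$ leftovers, and the same soundness argument from property~1.
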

\begin{proof}
Let $S$ be a set of $M\ge2m$ currently uncovered YES descriptions.
Every $m$-subset $Q\subseteq S$ supplies at least one accepted pair
$(X,y)$ with $X=Q\setminus\{y\}$ and $y\in S\setminus X$.
Distinct pools give distinct pairs. Averaging over the
$\binom{M}{m-1}$ possible sets $X$ shows that some tester accepts
at least
\[
 \frac{\binom Mm}{\binom M{m-1}}
 =\frac{M-m+1}{m}\ge\frac{M}{2m}
\]
members of $S$ outside $X$. Choose that tester as nonuniform advice
and remove every member of $S$ that it accepts.
Each such round decreases the number remaining by a factor at most
$1-1/(2m)$. Starting with at most $2^\ell$ descriptions, after
$O(m(\ell+1))$ rounds fewer than $2m$ remain: use
$(1-1/(2m))^r\le e^{-r/(2m)}$.
Hardwire those residual YES descriptions individually. If there were
fewer than $2m$ initially, hardwire them without any rounds.

The OR of the chosen testers and residual membership tests accepts
every YES description. Each tester is sound on NO inputs, as are
the membership tests, so the OR rejects every NO description.
The number of components is $O(m(\ell+1))$; their total size
is polynomial in $N$ under the stated assumption.
The chosen sets and their witnesses need not be efficiently found;
they are nonuniform advice.
\end{proof}

\begin{theorem}[Uniform-resource isolation transfer]\label{thm:resource}
For every $\ell\ge1$, $1\le q\le\ell$, and $k\ge1$, put
$m=(q+1)^k$, $n=(q+1)^{k+1}-1$, and $N=n+\ell+2$.
Fix an isolator whose circuit size, advice, seed length, and output
length are bounded by $(L+2)^a$ for some fixed $a$.
If its success on every nonempty affine circuit of the common
compiled length and arity is at least $2/n^{k/(k+1)}$, then
$\USAT(\ell,q)$ has a nonuniform decider of size and advice
$N^{\kappa(a)}$. Here $\kappa(a)\ge1$ depends only on $a$ and the
fixed encodings and algorithms, and is independent of $\ell,q,k$.
One such exponent is fixed for the remainder of the paper.
\end{theorem}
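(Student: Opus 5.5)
The plan is to assemble the pieces of Sections~\ref{sec:compiler}--\ref{sec:domination} into testers satisfying the hypotheses of Lemma~\ref{lem:domination}, and then track every resource as a polynomial in $N$ whose degree depends only on $a$. Fix $(\ell,q,k)$ and the compiler of Proposition~\ref{prop:size}, which gives a common length $L=L_{\ell,q,k}\le N^{C_0}$. Apply Lemma~\ref{lem:seeds} with $\gamma=1/(4B_\ell)$ to the collection of all nonempty affine circuit descriptions of length $L$; there are at most $2^L$ of them. This yields one seed list of length $J=O((L+1)B_\ell^2)$. Since $B_\ell=(q+1)^{2k+1}\le (n+1)^2\le N^2$, we get $J\le N^{O(C_0)}$, and the seed list has bit length at most $J(L+2)^a\le N^{O(aC_0)}$.

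\textbf{The tester.} For a set $X$ of $m-1$ YES descriptions with their witnesses, define $T_X$ on input $y$ as follows. First reject $y$ if it is malformed or lies in $X$. Otherwise form the sorted pool $Q=X\cup\{y\}$, compile $H_Q$, and let $j$ be the position of $y$. Compute an origin and basis of $R_j(Q)$ from $K_j$ and the hardwired witnesses, as in Lemma~\ref{lem:graph}. Then run the isolator, with its advice hardwired, on $H_Q$ under each of the $J$ seeds. Evaluate $a_j(Q)$ exactly by Lemma~\ref{lem:checking} and accept iff $a_j(Q)\le\tau$. The two conditions of Lemma~\ref{lem:domination} hold as follows.
\begin{enumerate}
\item \emph{Soundness.} If $y$ is NO, the pool is one-NO, so $\Sol(H_Q)=R_j(Q)$ is nonempty affine. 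The seed list then guarantees \eqref{eq:oneNO}, giving $a_j(Q)>\tau$, and $T_X$ rejects.
\item \emph{Coverage.} If $Q$ is all-YES, Lemma~\ref{lem:coupling} bounds the average of the $a_j(Q)$ by $2/m<\tau$. Hence some $j$ has $a_j(Q)\le\tau$, and the tester $T_{Q\setminus\{y_j\}}$ accepts $y_j$. This uses the fact that all designations produce the identical description $H_Q$ and share one seed list.
\end{enumerate}

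\textbf{Resource bookkeeping.} This is the step that needs care: the final exponent must not depend on $k$.
\begin{itemize}
\item Each tester's size is bounded by the following terms: compilation, $N^{C_0}$; region recovery, $N^{C_0}$; $J$ copies of the isolator on inputs of length $L$, totalling $J\cdot(L+2)^a\le N^{O(aC_0)}$; and Gaussian elimination on filters with at most $(L+2)^a$ rows and $n$ columns, again polynomial with degree $O(aC_0)$.
\item The rational comparison with $\tau$ uses numbers of bit length $O(\log N)$.
\item The hardwired witnesses and descriptions contribute $O(m\ell)\le N^2$.
\end{itemize}
Lemma~\ref{lem:domination} then combines $O(m(\ell+1))\le N^2$ testers with fewer than $2m$ hardwired descriptions. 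The isolator advice and the seed list are shared across testers, so they are counted once. The total size and advice is therefore $N^{\kappa(a)}$ with $\kappa(a)=c_1aC_0+c_2$ for absolute constants $c_1,c_2$.

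The main obstacle is confirming that no hidden $k$-dependence enters. The degree $C_0$ is uniform by Proposition~\ref{prop:size}, and the bounds $B_\ell\le N^2$ and $m\le N$ absorb the $k$-dependent quantities into $N$ itself. The only remaining subtlety is that the hypothesis is needed at exactly the compiled length and arity, which is precisely the range covered by the seed-fixing step.
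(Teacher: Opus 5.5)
Your proposal is correct and follows essentially the same route as the paper. It fixes one common seed list via Lemma~\ref{lem:seeds} with $\gamma=1/(4B_\ell)$, builds testers that compute $a_j(Q)$ exactly by Lemma~\ref{lem:checking} and compare it with $\tau$, gets soundness from \eqref{eq:oneNO} and coverage from Lemma~\ref{lem:coupling}, and keeps the exponent free of $k$ through $L\le N^{C_0}$ and $B_\ell\le N^2$. You depart from the paper only in rejecting rather than accepting when $y\in X$, and in accepting on $a_j(Q)\le\tau$ instead of $<\tau$. Both changes are harmless: the domination count uses only accepted members outside $X$, and the all-YES average is at most $2/m<\tau$.
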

\begin{proof}
The gap computation \eqref{eq:gap} holds for every $k\ge1$.
Its denominator satisfies the uniform bound
\begin{equation}\label{eq:Bbudget}
 B_\ell=(q+1)^{2k+1}=m(n+1)\le(n+1)^2\le N^2.
\end{equation}
By Proposition~\ref{prop:size}, the common description length is
$L\le N^{C_0}$. Lemma~\ref{lem:seeds} with
$\gamma=1/(4B_\ell)$ uses
$J=O((L+1)B_\ell^2)$ seeds. Since each seed has length at most
$(L+2)^a$, their total length is bounded by a fixed polynomial
in $N$, independently of $k$. The comparison threshold is rational:
\[
 \tau=\frac2m+\frac1{2B_\ell}=\frac{4n+5}{2B_\ell},
\]
so exact integer comparisons suffice. Its bit length is $O(\log N)$.

Fix $m-1$ distinct YES descriptions $X$ and their witnesses.
The tester $T_X$ accepts immediately if the input $y$ is in $X$.
Otherwise form the canonical sorted pool $Q=X\cup\{y\}$ and
find the position $j$ of $y$. Compile $H_Q$, recover $R_j(Q)$
from the known witnesses, and compute $a_j(Q)$ with the common
seed list. Accept exactly when $a_j(Q)<\tau$.
For NO $y$, this is a one-NO pool, so \eqref{eq:oneNO} forces
rejection. For an all-YES pool, \eqref{eq:coupling} and $2/m<\tau$
ensure that some designation is accepted.

Compilation and region recovery have fixed polynomial size in $N$.
The isolator computation on the compiled description, its explicit
output, and every regional Gaussian-elimination check also have
fixed polynomial size in $N$. Repeating for all $J$ seeds and
performing exact counting preserves that property. The advice for
$X$ and its witnesses uses $O(m(\ell+q))$ bits. All parameter data,
including the field modulus, equations, seed list, and isolator
advice, have the same fixed polynomial bound.
Lemma~\ref{lem:domination} combines $O(m(\ell+1))=O(N^2)$ testers
and fewer than $2m$ residual membership tests. Their total size
and advice are $N^{\kappa(a)}$ for one exponent independent of $k$.

Descriptions outside the witness cap or malformed descriptions can
be rejected syntactically. If there are too few YES descriptions
to form a pool, the hardwired residual case of domination already
supplies the bound. No success promise is used on an unsatisfiable
compiled circuit: each one-NO gadget has a nonempty affine
satisfying set.
\end{proof}

\begin{corollary}[The fixed-parameter promise consequence]
\label{prop:promise}
Under Theorem~\ref{thm:main}'s hypothesis, $\USAT$ has
polynomial-size nonuniform deciders.
\end{corollary}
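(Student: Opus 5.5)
The plan is to derive the corollary directly from Theorem~\ref{thm:resource}, specialized to $q=\ell$ and the fixed $k$ of Theorem~\ref{thm:main}. Set $m=(\ell+1)^k$ and $n=(\ell+1)^{k+1}-1$, so that $N=n+\ell+2=O(\ell^{k+1})$ is polynomial in the source length $\ell$.

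First I check that the hypothesis of Theorem~\ref{thm:main} supplies the hypothesis of Theorem~\ref{thm:resource}. Theorem~\ref{thm:main} bounds the isolator's computation, advice, seed length and output length by fixed polynomials in $L$, so they are at most $(L+2)^a$ for some fixed $a$. It also guarantees success at least $2/n^{\alpha}$, with $\alpha=k/(k+1)$, on every nonempty affine circuit once the arity $n$ is large enough. Since $n\ge\ell$, every sufficiently large $\ell$ gives a compiled arity beyond that threshold. Proposition~\ref{prop:size} fixes the arity exactly and places all compiled descriptions at the common length $L_{\ell,\ell,k}$. The promise is therefore available on every nonempty affine description of that length and arity, which is exactly what Theorem~\ref{thm:resource} (through Lemma~\ref{lem:seeds}) uses. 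That theorem then yields a decider for $\USAT(\ell,\ell)$ of size and advice $N^{\kappa(a)}=O(\ell^{(k+1)\kappa(a)})$, a polynomial in $\ell$ because $k$ and $a$ are fixed.

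Next I pass from the restricted problem to $\USAT$ itself. By the encoding conventions of Section~\ref{sec:model}, a valid description of length $\ell$ has at most $\ell$ witness variables. Hence $\USAT(\ell,\ell)$ is precisely $\USAT$ restricted to length-$\ell$ inputs, with malformed descriptions rejected syntactically. Only finitely many lengths fall below the threshold where the isolator's promise applies, and for each of these I hardwire a lookup table. A table has constant size for each fixed $\ell$, and finitely many constants do not affect polynomial size.

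The main point needing care is the threshold step. The success guarantee is stated in terms of the witness arity $n$, whereas the decider is indexed by the source length $\ell$. I must confirm that the compiled arity $n=(\ell+1)^{k+1}-1$ tends to infinity with $\ell$, which it does, so that one cutoff on $\ell$ suffices. I must also confirm that the exponent $\kappa(a)$ from Theorem~\ref{thm:resource} is uniform in $\ell$, so that no hidden dependence arises as the length varies.
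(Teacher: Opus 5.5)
Your proposal is correct and follows the paper's proof. You specialize Theorem~\ref{thm:resource} to $q=\ell$ and the fixed $k$, note that $N=\ell^{O(k+1)}$ is polynomial, use $n\ge\ell$ to reach the promised arity regime for large $\ell$, and hardwire the finitely many small lengths; your remark that $\USAT(\ell,\ell)$ is all of $\USAT$ at length $\ell$ makes explicit a step the paper leaves implicit.
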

\begin{proof}
Take $q=\ell$ and the fixed $k$ of that theorem. Then
$N=\ell^{O(k+1)}$, so Theorem~\ref{thm:resource} is polynomial
in $\ell$. For sufficiently large $\ell$ the arity lies in the
promised regime; finitely many smaller lengths may be hardwired.
\end{proof}

\section{Witness-preserving transfer from unique satisfiability}
\label{sec:VV}

The last step is the reduction from satisfiability to unique
satisfiability \cite{VV}. We need a version that fixes the randomness
nonuniformly and keeps the number of witness variables unchanged,
because Section~\ref{sec:growing} measures circuit size in that
number.

\begin{lemma}[Quantitative nonuniform Valiant--Vazirani transfer]
\label{lem:VV}
For each description length $s$ and witness cap $1\le v\le s$,
there is a common query length
$\ell_{s,v}=O(s^2\log(s+2))$ with the following property.
If $\USAT(\ell_{s,v},v)$ has a nonuniform decider of size $T$, then
circuit satisfiability on length-$s$ descriptions with at most $v$
witness variables has circuits of size $(s+2)^{O(1)}(T+1)$.
The query circuits have the same $v$ witness variables; all their
gates are computed wires. In particular, polynomial-size deciders
for $\USAT$ imply $\NP\subseteq\Ppoly$.
\end{lemma}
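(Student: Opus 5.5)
We follow the standard Valiant--Vazirani scheme, with two requirements throughout: the randomness is fixed nonuniformly, and every hash constraint is built into the query circuit as computed wires over the original $v$ witness variables. Let $G$ be a length-$s$ circuit on $v\le s$ variables, normalized to exactly $v$ variables as in Section~\ref{sec:model}. For $r\in\{0,1,\ldots,v+1\}$ and a random affine map $h(w)=Uw+c$ with $U\in\F_2^{r\times v}$ and $c\in\F_2^r$, form the query $G_{r,U,c}(w)=G(w)\wedge(Uw+c=0)$. The conjuncts are $r$ parity gates of fan-in at most $v$, so the query has $O(s+rv)=O(s^2)$ gates and no new witness variables. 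Encoding gate indices adds a logarithmic factor, which gives descriptions of length $O(s^2\log(s+2))$. We pad every query to the common length $\ell_{s,v}$ through the ignored padding field. Since pairwise independence is enough, we fix any one concrete hash family.

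\textbf{Isolation probability.} Suppose $|\Sol(G)|=S\ge1$, and choose $r$ with $2^{r-2}\le S\le 2^{r-1}$; such an $r$ lies in $\{1,\ldots,v+1\}$. Pairwise independence of $w\mapsto Uw+c$ then gives, by the usual inclusion--exclusion bound,
\[
 \Pr[\,|\Sol(G_{r,U,c})|=1\,]\ge S2^{-r}-\binom S2 2^{-2r}\ge \tfrac18 .
\]
This holds for each fixed $G$ at the correct $r$. If $G$ is unsatisfiable, every query is unsatisfiable, so it is a NO instance.

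\textbf{Fixing randomness.} For each $r\le v+1$ we draw $J=O(s)$ independent pairs $(U,c)$. For a fixed satisfiable $G$, the probability that none of the pairs at the correct $r$ isolates is at most $(7/8)^J<2^{-s}$. A union bound over the at most $2^s$ descriptions of length $s$ then shows that one list of $O(s\cdot v)$ pairs works for all of them simultaneously; we hardwire this list. The circuit for satisfiability evaluates the $\USAT(\ell_{s,v},v)$ decider on each of the $O(sv)$ queries and outputs the OR.

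\textbf{Correctness and size.} If $G$ is satisfiable, some query is a YES instance, so the decider accepts it. On queries with two or more solutions the decider may answer arbitrarily, but such answers can only be wrong on satisfiable $G$, and there the OR is correct anyway. If $G$ is unsatisfiable, every query is a NO instance and is rejected. Constructing the queries costs $(s+2)^{O(1)}$ per query, so the total size is $(s+2)^{O(1)}(T+1)$.

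\textbf{$\NP\subseteq\Ppoly$.} Given polynomial-size deciders for $\USAT$, take $v=s$: the query length $\ell_{s,s}$ is polynomial in $s$, so circuit satisfiability has polynomial-size circuits, and every $\NP$ language reduces to it.

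We expect the only delicate point to be the bookkeeping. The witness count must stay exactly $v$, the query length must be common and independent of $G$ (handled by padding), and the single hardwired hash list must serve all inputs. The probabilistic core is the textbook argument.
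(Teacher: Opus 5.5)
Your proposal is correct and follows the paper's argument: pairwise-independent affine hashing, a second-moment isolation bound, a union bound over the at most $2^s$ inputs to hardwire one hash list, and an OR of decider calls. The only structural difference is that you enumerate every $r\le v+1$ with $O(s)$ trials each instead of drawing $r$ at random, which gives the same $O(sv)$ queries.

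There is one small slip. The pair term in your display should be $S(S-1)2^{-2r}=2\binom S2 2^{-2r}$, not $\binom S2 2^{-2r}$; your version fails at $S=2$, $r=2$, where the exact probability is $3/8<7/16$. The corrected bound is at least $\lambda-\lambda^2\ge 3/16$ for $\lambda=S2^{-r}\in[1/4,1/2]$, so your $1/8$ and everything after it still hold.
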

\begin{proof}
Normalize an input circuit $C$ with at most $v$ witness variables
to exactly $v$ by forcing added variables to zero. This adds $O(v)$
gates and preserves the number of satisfying assignments.
If $C$ is satisfiable, let that number be $M\ge1$.
For $r\ge1$, choose a uniform binary $r\times v$ matrix $A$
and an independent uniform target $b\in\F_2^r$. Let $Z$ count
satisfying assignments retained by $Ax=b$, and put $\lambda=M2^{-r}$.
Each assignment survives with probability $2^{-r}$, and two distinct
assignments survive jointly with probability $2^{-2r}$.
Indeed, in each row their nonzero difference has a uniform parity,
and the independent target bit supplies the other independent bit;
the rows are independent. Thus
\[
 \mathbb E Z=\lambda,\qquad
 \mathbb E[Z(Z-1)]=M(M-1)2^{-2r}\le\lambda^2.
\]
For every nonnegative integer $z$,
$\one\{z=1\}\ge z-z(z-1)$, so
\begin{equation}\label{eq:VVprob}
 \Pr[Z=1]\ge\lambda-\lambda^2.
\end{equation}
For $r=\lceil\log_2M\rceil+1$ we have
$1/4<\lambda\le1/2$ and $1\le r\le v+1$.
The right side of \eqref{eq:VVprob} is then at least $3/16$.

Use $\lceil\log_2(v+1)\rceil$ fair bits to choose among
$r=1,\ldots,v+1$, sending unused bit patterns to an unsatisfiable
pruning. The good value occurs with probability at least
$1/[2(v+1)]$. A fixed additional random-bit budget describes the
largest possible matrix and target, with unused rows ignored.
One trial outputs $C'(x)=C(x)\wedge[Ax=b]$, and its singleton
probability on every satisfiable input is at least
\[
 p_v=\frac3{32(v+1)}.
\]
Every trial on an unsatisfiable input remains unsatisfiable.
A dense matrix supplies $O(v^2)$ parity gates and coefficients,
and does not add any witness variables. With indexed gate
encoding, all query descriptions have length
$O((s+v^2)\log(s+v+2))=O(s^2\log(s+2))$.
Pad them to one common length $\ell_{s,v}$, chosen at least $v$.

Take an integer $J>(s+1)\ln(2)/p_v$. Then
$J=O((s+1)(v+1))$. For a fixed satisfiable length-$s$ input,
the probability none of $J$ independent trials isolates is at most
$(1-p_v)^J\le e^{-p_vJ}<2^{-s-1}$.
There are at most $2^s$ input descriptions, so a union bound fixes
one common list of trials that isolates every satisfiable input
within the witness cap at least once. The list has polynomial
bit length in $s$.

Hardwire that list and the decider for $\USAT(\ell_{s,v},v)$.
Build the prunings, run the decider on each, and take the OR.
An unsatisfiable input produces only promised NO queries; a
satisfiable input produces at least one promised YES query.
The decider's arbitrary answers on multiple-witness queries cannot
invalidate either conclusion. Query construction, the seed list,
and $J$ decider copies give size $(s+2)^{O(1)}(T+1)$.
All advice is chosen once for the pair $(s,v)$.

For the polynomial-size conclusion take $v=s$. For an arbitrary
language in $\NP$, unroll its polynomial-time verifier, with the
input fixed, into a circuit on its certificate bits. The input is
in the language exactly when that circuit is satisfiable.
The construction and resulting description length are polynomial
in the input length, so composing with polynomial-size circuit
satisfiability deciders gives $\NP\subseteq\Ppoly$.
\end{proof}

\begin{proof}[Completion of Theorem~\ref{thm:main}]
Corollary~\ref{prop:promise} gives polynomial-size promise deciders,
and Lemma~\ref{lem:VV} supplies the transfer to $\NP\subseteq\Ppoly$.
\end{proof}

The hashing argument in this proof also shows that a random explicit
affine filter isolates with probability $\Omega(1/n)$ on every
nonempty set of $n$-bit witnesses, with no knowledge of the set. In
particular it needs no basis for an affine input set. So success of
order $1/n$ is attainable, and all our hypotheses lie above it.

\section{Success probabilities closer to \texorpdfstring{$1/n$}{1/n}}
\label{sec:growing}

Theorem~\ref{thm:resource} holds for every $k$ with one resource
exponent, so $k$ may grow with $q$. This section chooses $k$ as a
function of $q$ and proves Theorem~\ref{thm:growingintro}.
For any schedule $k=k(q)\ge1$, the exact threshold is
\begin{equation}\label{eq:endpointnumerator}
 \frac2{n^{k/(k+1)}}=\frac{2n^{1/(k+1)}}n
 <\frac{2(q+1)}n,
 \qquad n=(q+1)^{k+1}-1.
\end{equation}
Let $h=\log_2(q+1)$. The exact arity also gives
\begin{equation}\label{eq:logbudget}
 kh<\log_2n<(k+1)h.
\end{equation}
Thus, when $\ell=q^{O(1)}$, the promise decider of
Theorem~\ref{thm:resource} has all its resources bounded by
\[
 2^{O((k+1)\log(q+1))},
\]
with a constant independent of $k$.
For $k=\lceil B\log_2(q+1)\rceil$ and fixed $B>0$, this is
$2^{O((\log q)^2)}$. For $k=\lceil q^\varepsilon\rceil$ and
fixed $0<\varepsilon<1$, it is
$2^{O(q^\varepsilon\log q)}=2^{o(q)}$.
The seed list, threshold comparison, regional checks, and domination
advice are all included in these bounds.

\begin{proof}[Proof of Theorem~\ref{thm:growingintro}, part 1]
Fix $\theta,c>0$ from the success hypothesis and choose a fixed
$B>1/\theta^2$. Set $k=\lceil B\log_2(q+1)\rceil$.
By \eqref{eq:logbudget},
\[
 \sqrt{\log_2n}>\sqrt B\,\log_2(q+1),
 \qquad
 c\,2^{\theta\sqrt{\log_2n}}>c(q+1)^{\theta\sqrt B}.
\]
Since $\theta\sqrt B>1$, this exceeds $2(q+1)$ for all sufficiently
large $q$. Equations \eqref{eq:endpointnumerator} and the hypothesis
therefore give the success needed in Theorem~\ref{thm:resource}.

For a length-$s$ circuit satisfiability input, take $q=s$ and
$\ell=\ell_{s,s}=s^{O(1)}$ from Lemma~\ref{lem:VV}.
The restricted promise decider and its transfer have size
$2^{O((\log s)^2)}$.
For a language in $\NP$, a fixed reduction has output description
length at most $s^d$ on inputs of length $s$, for some fixed $d$.
Substituting this length changes $(\log s)^2$ by only a constant
factor. Every language in $\NP$ thus has quasipolynomial-size
circuits. Finite exceptional lengths may be hardwired.
\end{proof}

\begin{theorem}[Witness-sensitive polylogarithmic consequence]
\label{thm:polylog}
Assume the explicit affine isolator has success at least
$c(\log_2 n)^A/n$ on every sufficiently large nonempty affine
input, with fixed $c>0$ and $A>1$.
For each $1\le v\le s$, circuit satisfiability on length-$s$
descriptions with at most $v$ witness variables has nonuniform
circuits of size
\[
 (s+2)^{O(1)}+2^{O(v^{1/A})}.
\]
In particular, taking $v=s$ proves part 2 of
Theorem~\ref{thm:growingintro}.
\end{theorem}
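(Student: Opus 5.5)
The plan is to run the growing-parameter scheme of Section~\ref{sec:growing} with the source witness cap tied to the satisfiability witness cap, and then pass through Lemma~\ref{lem:VV}. Given $1\le v\le s$, I would set $q=v$ and $\ell=\ell_{s,v}=O(s^2\log(s+2))$ from Lemma~\ref{lem:VV}, so that $1\le q\le\ell$. I would choose $k=k(q)\ge1$ just large enough that the hypothesized success beats the exact threshold \eqref{eq:endpointnumerator}. Then Theorem~\ref{thm:resource} gives a decider for $\USAT(\ell,v)$ of size $N^{\kappa(a)}$ with $N=n+\ell+2$, and Lemma~\ref{lem:VV} converts it into circuits for satisfiability with at most $v$ witness variables.

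\textbf{Choice of $k$.} Put $h=\log_2(q+1)$. By \eqref{eq:endpointnumerator} it suffices to have $c(\log_2 n)^A\ge 2(q+1)$. By \eqref{eq:logbudget} we have $\log_2 n>kh$, so it is enough that $c(kh)^A\ge2(q+1)$. I would therefore take
\[
 k=\max\Bigl(k_0,\ \Bigl\lceil (2(q+1)/c)^{1/A}/h\Bigr\rceil\Bigr),
\]
where the fixed $k_0$ makes $n=(q+1)^{k+1}-1$ exceed the isolator's ``sufficiently large'' threshold even for $q=1$. Since $n\ge 2^{k+1}-1$, such a $k_0$ exists, and it affects only constants. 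With this choice,
\[
 \log_2(n+1)=(k+1)h=O\bigl(h+(q+1)^{1/A}\bigr)=O(v^{1/A}),
\]
using $v\ge1$. Hence $n\le 2^{O(v^{1/A})}$, with constants depending only on $c$, $A$, $k_0$. The assumption $A>1$ is used only to ensure the hypothesis has the stated form; the calculation itself just needs $kh\approx q^{1/A}$.

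\textbf{Size bookkeeping.} Theorem~\ref{thm:resource} gives size and advice at most
\[
 T=N^{\kappa(a)}\le\bigl(2\max(n+1,\ell+2)\bigr)^{\kappa(a)}\le (s+2)^{O(1)}+2^{O(v^{1/A})},
\]
because $\ell=s^{O(1)}$ and $\kappa(a)$ is independent of $k$. Lemma~\ref{lem:VV} then yields size $(s+2)^{O(1)}(T+1)$. This is a product, but it has the right form: $xy\le (x+y)^2\le 2x^2+2y^2$, and squaring each term only changes the constants in the exponents. The result is $(s+2)^{O(1)}+2^{O(v^{1/A})}$.

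For part~2 of Theorem~\ref{thm:growingintro} I would take $v=s$ and note that $(s+2)^{O(1)}\le 2^{O(s^{1/A})}$.

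\textbf{Main obstacle.} The step needing care is the uniformity of constants. The exponent $\kappa(a)$ must not grow with $k$, and Theorem~\ref{thm:resource} guarantees exactly that. In addition, the isolator's promise must hold at every arity used, which is why $k_0$ is enforced rather than hardwiring finitely many $v$, since $v$ ranges with $s$. I would also check that $\ell_{s,v}\ge q=v$, which holds by the padding in Lemma~\ref{lem:VV}.
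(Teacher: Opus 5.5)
Your proposal is correct and follows the paper's proof essentially step for step. It uses the same choice $k\approx q^{1/A}/\log_2(q+1)$; the paper puts $4(q+1)/c$ where you put $2(q+1)/c$, but that is only slack. It then applies Theorem~\ref{thm:resource} with $q=v$ and $\ell=\ell_{s,v}$, passes through Lemma~\ref{lem:VV}, and absorbs the cross term with $xy\le x^2+y^2$, exactly as the paper does.

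The one difference is harmless. You force $k\ge k_0$ so that every compiled arity is in the promised range. The paper instead handles the finitely many small witness caps by brute force over $2^v=O(1)$ assignments, at cost $O(1)\cdot(s+2)^{O(1)}$. So treating small $v$ separately does work, contrary to your side remark, though your $k_0$ floor is equally valid.
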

\begin{proof}
For the source witness cap $q$, define
\begin{equation}\label{eq:optimizedk}
 t=\left(\frac{4(q+1)}c\right)^{1/A},\qquad
 k=\max\left(1,\left\lceil\frac{t}{\log_2(q+1)}\right\rceil\right).
\end{equation}
Then $k\log_2(q+1)\ge t$, so \eqref{eq:logbudget} gives
\[
 c(\log_2 n)^A>ct^A=4(q+1)>2(q+1).
\]
The endpoint requirement follows from \eqref{eq:endpointnumerator}.
Also $k\le t/\log_2(q+1)+1$, since the quotient is positive,
and hence
\[
 \log_2n<t+2\log_2(q+1)=O(q^{1/A}).
\]
Here and below the constants may depend on $c,A$.
Theorem~\ref{thm:resource} consequently gives size
$(\ell+2)^{O(1)}+2^{O(q^{1/A})}$: for a fixed $C$,
$(n+\ell+2)^C$ is at most a constant multiple of
$(n+1)^C+(\ell+1)^C$.

Now apply Lemma~\ref{lem:VV} with $q=v$ and
$\ell=\ell_{s,v}=s^{O(1)}$. Multiplying by its polynomial
transfer overhead still gives
$(s+2)^{O(1)}+2^{O(v^{1/A})}$, using $xy\le x^2+y^2$
for nonnegative $x,y$ to absorb the mixed product.
For the finitely many small witness caps whose compiled arities
fall below the isolator's promised range, direct exhaustive
checking costs only a constant times a polynomial in $s$.
When $v=s$, the bound is $2^{O(s^{1/A})}=2^{o(s)}$.
\end{proof}

The choice $k=\lceil q^\varepsilon\rceil$ is another valid
subexponential schedule whenever $0<\varepsilon<1$ and the
success numerator dominates $2(q+1)$. Under the hypothesis of
Theorem~\ref{thm:polylog}, this holds for every
$1/A\le\varepsilon<1$, since
$(\log n)^A>q^{\varepsilon A}(\log(q+1))^A$ eventually dominates
$q$. The optimized choice \eqref{eq:optimizedk} removes the extra
$\log q$ from the resulting size exponent.

\begin{theorem}[Quantitative consequence of a logarithmic numerator]
\label{thm:constantlog}
Fix $C>0$ and an explicit affine isolator whose computation, advice,
seed length, and output length are bounded by $(L+2)^a$.
Suppose its success is at least $C\log_2(n)/n$ on every sufficiently
large nonempty affine circuit input. For each $1\le v\le s$,
circuit satisfiability on length-$s$ descriptions with at most $v$
witness variables has nonuniform circuits of size
\begin{equation}\label{eq:constantlogsize}
 K_{\Iso,C}(s+2)^{b(a)}\,2^{(2\kappa(a)/C)v}.
\end{equation}
Here $\kappa(a)$ is as in Theorem~\ref{thm:resource}, the exponent
$b(a)$ depends only on $a$ and the fixed algorithms and encodings,
and $K_{\Iso,C}$ is a constant. This proves part 4 of
Theorem~\ref{thm:growingintro}.
\end{theorem}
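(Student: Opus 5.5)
We follow the pattern of Theorem~\ref{thm:polylog}: choose the schedule $k=k(q)$ so that $C\log_2(n)/n$ exceeds the endpoint threshold of \eqref{eq:endpointnumerator}, apply Theorem~\ref{thm:resource}, and then transfer through Lemma~\ref{lem:VV}. By \eqref{eq:endpointnumerator} it suffices that $C\log_2 n\ge 2(q+1)$. By \eqref{eq:logbudget}, $\log_2 n>k\log_2(q+1)$, so we take
\[
 k=\max\left(1,\left\lceil\frac{2(q+1)}{C\log_2(q+1)}\right\rceil\right),
\]
which gives $C\log_2 n>Ck\log_2(q+1)\ge2(q+1)$. The hypothesis of Theorem~\ref{thm:resource} therefore holds at every sufficiently large $q$, since $n\ge q+1$ grows with $q$. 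The finitely many small $q$ below the promised range are handled by exhaustive checking, which is absorbed into $K_{\Iso,C}$.

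Next we bound $N$. From \eqref{eq:logbudget}, $\log_2 n<(k+1)\log_2(q+1)\le \frac{2(q+1)}{C}+2\log_2(q+1)$, so $n+1\le (q+1)^{O(1)}\,2^{2(q+1)/C}$. Theorem~\ref{thm:resource} then gives a decider for $\USAT(\ell,q)$ of size
\[
 N^{\kappa(a)}\le\bigl((n+1)+(\ell+2)\bigr)^{\kappa(a)}\le 2^{\kappa(a)}\bigl((n+1)^{\kappa(a)}+(\ell+2)^{\kappa(a)}\bigr),
\]
and the first term is at most $(q+1)^{O(\kappa(a))}\,2^{(2\kappa(a)/C)(q+1)}$. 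The shift from $q+1$ to $q$ costs only the factor $2^{2\kappa(a)/C}$, which goes into $K_{\Iso,C}$. It is important that $\kappa(a)$ does not depend on $k$; Theorem~\ref{thm:resource} guarantees this.

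Finally we apply Lemma~\ref{lem:VV} with $q=v$ and $\ell=\ell_{s,v}=O(s^2\log(s+2))$. This yields circuits for satisfiability of size $(s+2)^{O(1)}(T+1)$, where $T\le K(s+2)^{O(\kappa(a))}2^{(2\kappa(a)/C)v}$, using $q=v\le s$ to absorb $(q+1)^{O(\kappa(a))}$ into a polynomial in $s$. The polynomial exponents come from Lemma~\ref{lem:VV}'s overhead and $\kappa(a)$, so they depend only on $a$ and the fixed algorithms; they combine into $b(a)$.

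The main obstacle is the constant in the exponent. Every polynomial factor must stay outside the $2^{\Theta(v)}$ term and the rate must come out exactly $2\kappa(a)/C$. This requires $n+1\le\mathrm{poly}(q)\cdot 2^{2(q+1)/C}$, with no extra factor from the ceiling in $k$, which would cost only $(q+1)^{O(1)}$. It also requires keeping $\ell$ separate in the sum bound rather than multiplying, so that no $\ell$-dependence enters the exponential.
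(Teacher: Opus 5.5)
Your proposal is correct and follows the paper's proof essentially step for step. It uses the same schedule $k=\max\bigl(1,\lceil 2(q+1)/(C\log_2(q+1))\rceil\bigr)$ and the same two-sided use of \eqref{eq:logbudget} to get both $C\log_2 n>2(q+1)$ and $n+1<(q+1)^2 2^{2(q+1)/C}$. Like the paper, you split $N^{\kappa(a)}$ into an $n$-part and an $\ell$-part, keep the overhead of Lemma~\ref{lem:VV} multiplicative, and absorb the exceptional small caps into $K_{\Iso,C}$.

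One small correction to your closing commentary concerns which step is delicate. The paper singles out keeping the Lemma~\ref{lem:VV} overhead multiplicative rather than applying $xy\le x^2+y^2$ as in Theorem~\ref{thm:polylog}, which would double the rate to $4\kappa(a)/C$. How you split off $\ell$ inside $N^{\kappa(a)}$ is immaterial: even the product bound $(n+\ell+2)^{\kappa}\le\bigl((n+1)(\ell+2)\bigr)^{\kappa}$ keeps the rate at $2\kappa(a)/C$.
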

\begin{proof}
For a source witness cap $q$, put $h=\log_2(q+1)$ and choose
\begin{equation}\label{eq:constantlogk}
 k=\max\left(1,\left\lceil\frac{2(q+1)}{Ch}\right\rceil\right),
 \qquad n=(q+1)^{k+1}-1.
\end{equation}
The ceiling is already at least one. By \eqref{eq:logbudget},
\begin{equation}\label{eq:constantlogbudget}
 \frac{2(q+1)}C\le kh<\log_2n
 <(k+1)h\le\frac{2(q+1)}C+2h.
\end{equation}
Consequently
\[
 C\log_2n>2(q+1)>2n^{1/(k+1)}.
\]
The hypothesis therefore exceeds the exact threshold in
\eqref{eq:endpointnumerator}, and Theorem~\ref{thm:resource}
applies with its existing inverse-polynomial gap. No new precision
or seed bound is required.

Write $\kappa=\kappa(a)\ge1$. The upper bound in
\eqref{eq:constantlogbudget} gives
\[
 n^\kappa<2^{2\kappa/C}(q+1)^{2\kappa}
                 2^{(2\kappa/C)q}.
\]
Since $q\le\ell$ and
\[
 (n+\ell+2)^\kappa
 \le2^{\kappa-1}\bigl(n^\kappa+(\ell+2)^\kappa\bigr),
\]
the restricted promise decider has size at most
$K_C(\ell+2)^{2\kappa}2^{(2\kappa/C)q}$ for a constant $K_C$.
Apply Lemma~\ref{lem:VV} with $q=v$ and
$\ell=\ell_{s,v}=O(s^2\log(s+2))$. Its polynomial overhead
multiplies this expression and is absorbed into $(s+2)^{b(a)}$.
In particular, we keep that overhead multiplicative instead of
using a product-to-sum estimate that would enlarge the exponential
coefficient. This proves \eqref{eq:constantlogsize} with the exact
coefficient $2\kappa(a)/C$.
For the bounded exceptional witness caps whose compiled arities
lie below the isolator's eventual guarantee, exhaustive evaluation
costs a constant times a polynomial in $s$, and is absorbed by
$K_{\Iso,C}$.
\end{proof}

\begin{corollary}[Resource-qualified exponential rates]
\label{cor:rates}
Fix a resource degree $a$ and a target $\varepsilon>0$.
If an isolator with that degree has eventual success at least
$C\log_2(n)/n$ for some $C\ge2\kappa(a)/\varepsilon$, then circuit
satisfiability has size
$(s+2)^{O(1)}2^{\varepsilon v}$ at witness cap $v$.
\end{corollary}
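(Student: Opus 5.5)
This is a direct specialization of Theorem~\ref{thm:constantlog}, so the plan is to apply that theorem with the given constant and read off the exponential rate. Fix the resource degree $a$ and $\varepsilon>0$. Suppose the isolator has eventual success at least $C\log_2(n)/n$ with $C\ge2\kappa(a)/\varepsilon$. Then $C>0$, since $\kappa(a)\ge1$ and $\varepsilon$ is finite. So the hypothesis of Theorem~\ref{thm:constantlog} holds verbatim for this $C$ and this $a$.

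That theorem yields, for each $1\le v\le s$, circuits for satisfiability on length-$s$ descriptions with at most $v$ witness variables of size
\[
 K_{\Iso,C}(s+2)^{b(a)}\,2^{(2\kappa(a)/C)v}.
\]
The assumption $C\ge2\kappa(a)/\varepsilon$ is equivalent to $2\kappa(a)/C\le\varepsilon$. Because $v\ge1$, this gives $2^{(2\kappa(a)/C)v}\le2^{\varepsilon v}$. The factor $K_{\Iso,C}(s+2)^{b(a)}$ is $(s+2)^{O(1)}$: the exponent $b(a)$ depends only on $a$ and the fixed encodings, and the constant $K_{\Iso,C}$ depends only on the isolator and $C$. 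This gives the claimed bound $(s+2)^{O(1)}2^{\varepsilon v}$.

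The only point needing care is that the $O(1)$ is allowed to depend on the isolator, which the paper's convention permits. Otherwise there is no real obstacle. The corollary records that the exponential coefficient in Theorem~\ref{thm:constantlog} is exactly $2\kappa(a)/C$ and is not inflated by a product-to-sum step, and it is this exactness that makes a large enough $C$ drive the rate below any prescribed $\varepsilon$.
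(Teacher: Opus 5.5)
Your proposal is correct and matches the paper's proof. The paper likewise substitutes $2\kappa(a)/C\le\varepsilon$ into the bound \eqref{eq:constantlogsize} of Theorem~\ref{thm:constantlog}, and your remarks that $C>0$ and that $K_{\Iso,C}(s+2)^{b(a)}$ can be absorbed into $(s+2)^{O(1)}$ just spell out details the paper leaves implicit.
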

\begin{proof}
Substitute $2\kappa(a)/C\le\varepsilon$ into
\eqref{eq:constantlogsize}.
\end{proof}

The order of the quantifiers matters in Corollary~\ref{cor:rates}.
Isolators for arbitrarily large constants $C$ with one common resource
degree $a$ give arbitrarily small exponential rates. Isolators for
every $C$ with uncontrolled degrees $a(C)$ give only the rate
$2\kappa(a(C))/C$, which need not tend to zero. A single fixed
$C\log(n)/n$ guarantee gives the rate $2\kappa(a)/C$, and not by
itself a subexponential bound.

\begin{corollary}[Every superlogarithmic numerator]\label{cor:superlog}
In the explicit affine model, an isolator of resource degree $a$
with a success guarantee $p(n)=\omega(\log_2n/n)$ on every
sufficiently large nonempty affine input implies circuit
satisfiability circuits of size $(s+2)^{b(a)}2^{o(v)}$ for
length-$s$ descriptions with at most $v$ witness variables.
Here $b(a)$ is as in Theorem~\ref{thm:constantlog}. This proves
part 3 of Theorem~\ref{thm:growingintro}.
\end{corollary}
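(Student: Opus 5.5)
The plan is to reduce Corollary~\ref{cor:superlog} to Theorem~\ref{thm:constantlog} applied with a sequence of growing constants $C$. The resource degree $a$ is fixed, so $\kappa(a)$ and $b(a)$ do not depend on $C$; this is exactly the quantifier order that Corollary~\ref{cor:rates} identifies as giving arbitrarily small rates. Since $p(n)=\omega(\log_2 n/n)$, for every fixed $C>0$ there is a threshold $n_C$ beyond which $p(n)\ge C\log_2(n)/n$. The same isolator therefore satisfies the hypothesis of Theorem~\ref{thm:constantlog} for every $C$, each time with the same resource degree $a$.

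For each integer $j\ge1$ put $C_j=2\kappa(a)j$. Theorem~\ref{thm:constantlog} then gives, for all $1\le v\le s$, circuits of size $K_j(s+2)^{b(a)}2^{v/j}$ with $K_j=K_{\Iso,C_j}$. The constant $K_j$ may grow with $j$, because it absorbs the exceptional witness caps lying below the eventual guarantee. I would avoid committing to any single $j$ uniformly. Instead I would define a nondecreasing schedule $j(v)\to\infty$ so slowly that $K_{j(v)}\le 2^{v/j(v)}$. For instance, choose increasing thresholds $v_1<v_2<\cdots$ with $v_j\ge j\log_2 K_j$, and set $j(v)=\max\{j:v_j\le v\}$, using $j=1$ when $v<v_1$. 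For each witness cap $v$, the nonuniform circuit is the one that Theorem~\ref{thm:constantlog} supplies for $C_{j(v)}$. That choice is legitimate because the circuit family is nonuniform and may depend on $(s,v)$. Its size is at most $(s+2)^{b(a)}2^{2v/j(v)}$, which is $(s+2)^{b(a)}2^{o(v)}$. The finitely many $v<v_1$ cost only a bounded factor, which is absorbed into the constant.

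The main point to check is that the exponent $b(a)$ of $s+2$ really is independent of $C$. Otherwise the polynomial factor would drift as $j$ grows. Theorem~\ref{thm:constantlog} states that $b(a)$ depends only on $a$ and the fixed encodings, and its proof keeps the Valiant--Vazirani overhead of Lemma~\ref{lem:VV} multiplicative. So the only $C$-dependence sits in $K_{\Iso,C}$ and in the rate $2\kappa(a)/C$, and the diagonal schedule above handles both. One more step is to confirm that the $o(v)$ is uniform in $s$. It is, because the bound $2^{v/j(v)}K_{j(v)}\le 2^{2v/j(v)}$ depends only on $v$.
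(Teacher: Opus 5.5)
Your proposal is correct and is essentially the paper's proof. You apply Theorem~\ref{thm:constantlog} with $C_j=2\kappa(a)j$, pick increasing thresholds $v_j\ge j\log_2K_{\Iso,C_j}$ so that $K_{\Iso,C_j}\le2^{v/j}$ once $v\ge v_j$, and get the bound $(s+2)^{b(a)}2^{2v/j(v)}=(s+2)^{b(a)}2^{o(v)}$, using that $b(a)$ does not depend on $C$ and that the schedule does not depend on $s$.
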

\begin{proof}
The same degree-$a$ isolator eventually meets $C\log_2(n)/n$ for
every fixed $C>0$. Theorem~\ref{thm:constantlog} gives
$K_C(s+2)^{b(a)}2^{(2\kappa(a)/C)v}$ with $b(a)$ independent of $C$.
Set $C_j=2\kappa(a)j$ and take strictly increasing integer thresholds
$V_j\ge j\max\{1,\log_2K_{C_j}\}$.
For $V_j\le v<V_{j+1}$, the bound is at most
$(s+2)^{b(a)}2^{2v/j}=(s+2)^{b(a)}2^{o(v)}$.
Use the $C_1$ family for $v<V_1$; all choices are nonuniform and
independent of $s$.
\end{proof}

\paragraph{Reduction length.}
For a fixed $\NP$ language whose reduction to circuit satisfiability
has description length $O(s^d)$, Theorem~\ref{thm:polylog} gives
size $2^{O(s^{d/A})}$, with polynomial overhead absorbed.
This is subexponential in that language's own input length when
$d<A$. It does not show that all of $\NP$ has subexponential circuits
for one fixed $A>1$.

\paragraph{Hardness assumptions.}
Viewing a $3$-CNF formula as a circuit, Theorem~\ref{thm:polylog}
yields circuits of size $2^{O(v^{1/A})}$ for satisfiability of
polynomial-length $3$-CNF formulas on $v$ variables, and
Corollary~\ref{cor:superlog} yields size $2^{o(v)}$ under any
$\omega(\log n/n)$ guarantee. Both hypotheses therefore contradict a
nonuniform assumption that rules out $2^{o(v)}$-size circuits for
that problem. For the fixed logarithmic guarantee,
Theorem~\ref{thm:constantlog} gives size $2^{(2\kappa(a)/C+o(1))v}$
on polynomial-length inputs, which contradicts a nonuniform
assumption excluding circuits of size $2^{\eta v}$ once
$2\kappa(a)/C<\eta$. The weaker condition $C>2\kappa(a)$ improves on
exhaustive search but need not contradict a given assumption.
We state these comparisons in the variable count $v$ because the
bit length $s$ can include logarithmic gate-index overhead. The
assumptions have to be nonuniform, since all reductions here fix
seeds and advice nonuniformly; the uniform exponential time
hypothesis does not suffice.

\section{A converse from SAT search to decision}\label{sec:converse}

The converse uses the usual search-to-decision self-reduction of
SAT; see \cite[Section~1.1]{BG} for background on this standard
property. We state the nonuniform size dependence explicitly,
since different resource classes have different closure properties.

\begin{lemma}[SAT circuits produce explicit singleton filters]
\label{lem:converse}
Suppose that, for each description length $t$ and witness cap
$1\le v\le t$, circuit satisfiability has a decision circuit
$D_{t,v}$ of size at most $S(t,v)$. There is a fixed polynomial
padding and query-length bound $P(L,n)\ge n$ such that, for each
pair $(L,n)$ with $1\le n\le L$, a deterministic nonuniform
transformation on length-$L$ circuits with $n$ ordered witness
variables has size
\begin{equation}\label{eq:conversesize}
 n\,S(P(L,n),n)+\operatorname{poly}(L+n)
\end{equation}
and outputs an explicit affine filter isolating its input $H$
whenever $H$ is satisfiable. Its output has $n$ equations, and its success is one
on every nonempty satisfying set, whether affine or not.
\end{lemma}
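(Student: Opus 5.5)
The plan is the standard bit-by-bit self-reduction, with each intermediate query padded to one common length so that a single hardwired decider $D_{P(L,n),n}$ serves all $n$ queries. Fix $(L,n)$ with $1\le n\le L$. For a prefix $c=(c_1,\ldots,c_i)\in\F_2^i$, define the restricted circuit $H_c(z)=H(z)\wedge[z_1=c_1]\wedge\cdots\wedge[z_i=c_i]$. It keeps the same $n$ ordered witness variables, so its witness cap is $n$, and it adds $O(n)$ gates. The restriction is syntactic, with the constants $c$ as computed wires. With indexed gate encoding the resulting description length is bounded by a fixed polynomial $P_0(L,n)$. We then use the explicit ignored padding field of Section~\ref{sec:model} to pad every query to one prescribed length $P(L,n)\ge\max(P_0(L,n),n)$, so the padded length depends only on $(L,n)$ and not on $c$ or $i$. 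Padding preserves the function and the ordered variables.

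The transformation runs in $n$ rounds. Before round $i$ it holds the prefix $c_1,\ldots,c_{i-1}$, which is empty at the start. In round $i$ it builds the padded description of $H_{(c_1,\ldots,c_{i-1},0)}$, runs the hardwired copy of $D_{P(L,n),n}$ on it, and sets $c_i=0$ if the answer is ``satisfiable'' and $c_i=1$ otherwise. The invariant is that $H_{(c_1,\ldots,c_i)}$ is satisfiable whenever $H$ is. This holds by induction: if $H_{c'}$ is satisfiable for the current prefix $c'$, then $H_{c'0}$ or $H_{c'1}$ is satisfiable, and the decision is exact because the decider answers correctly on all inputs (it is a decision circuit, not a promise decider). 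After $n$ rounds, $w=c$ satisfies $H$. The output filter is $Az=b$ with $A=I_n$ and $b=w$, so it has $n$ equations. Then $\Sol(H)\cap\{z:z=w\}=\{w\}$, and the success is one on every satisfiable $H$, affine or not. On unsatisfiable inputs the output is some filter whose quality does not matter.

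For the size bound, unroll the $n$ rounds into a circuit. Each round contains one copy of $D_{P(L,n),n}$, of size $S(P(L,n),n)$, plus a query-construction subcircuit that writes the padded restricted description from $H$ and the already computed bits. This subcircuit is a fixed polynomial in $L+n$: it copies the description, appends $O(n)$ equality gates wired to the computed constants, and pads. Outputting the identity matrix and $w$ costs $O(n^2)$. The total is $n\,S(P(L,n),n)+\operatorname{poly}(L+n)$, which is \eqref{eq:conversesize}. The transformation is deterministic and nonuniform only through the hardwired decider.

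The main obstacle is ensuring that one decider circuit applies to all $n$ queries, since $D_{t,v}$ is indexed by exact description length. I will handle this with the common padded length $P(L,n)$, making sure that the constants $c_i$ enter as wires feeding fixed-position gate slots rather than changing the syntax length. With that settled, everything else is routine.
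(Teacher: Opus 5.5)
Your proposal is correct and matches the paper's proof step for step. Both pad the prefix-restricted queries to one common length $P(L,n)$, run $n$ unrolled stages with one copy of $D_{P(L,n),n}$ each, maintain the zero-extension-first invariant, and output $A=I_n$, $b=w$. You also say explicitly that the hypothesis supplies exact (not promise) decision circuits, which the invariant needs; the paper's argument uses this implicitly.
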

\begin{proof}
For a prefix $u\in\{0,1\}^i$, let $H_u$ be the circuit
\[
 H_u(z)=H(z)\wedge\bigwedge_{j=1}^i[z_j=u_j].
\]
This query has the same $n$ witness variables as $H$.
All equality tests and conjunctions are computed gates.
Our encoding allows construction and padding of every such query
to one common attainable length $P(L,n)=\operatorname{poly}(L+n)$,
independent of the prefix and its length. Query construction has
polynomial circuit size. The bound may be chosen large enough to
cover every $i\le n$ and every length-$L$ input description.

Starting from the empty prefix, construct $w_1,\ldots,w_n$
successively. At stage $i$, evaluate $D_{P(L,n),n}$ on the query
$H_{w_1\cdots w_{i-1}0}$. If it reports satisfiable, set $w_i=0$;
otherwise set $w_i=1$. When $H$ is satisfiable, induction on $i$
shows that the chosen prefix has a satisfying extension: the
zero extension is used if possible, and otherwise the previously
feasible prefix must have a one extension. Hence the completed
string $w$ satisfies $H$.

Unroll the $n$ adaptive stages into a Boolean circuit. Each stage
uses one copy of the same decision circuit, with its nonuniform
data fixed for $(P(L,n),n)$. The input description and the already
computed prefix supply the query wires. The total cost is
\eqref{eq:conversesize}; the polynomial term includes all query
construction, parsing, and output writing. Emit
\[
 A=I_n,\qquad b=w.
\]
The explicit $n\times n$ identity matrix and right-hand side have
$O(n^2)$ bits, and their solution set is exactly $\{w\}$.
Thus $\Sol(H)\cap\{z:Az=b\}=\{w\}$ whenever $H$ is satisfiable.
No correctness requirement is imposed on an unsatisfiable input;
the same procedure may output any $w$ there. Malformed circuit
descriptions can be detected syntactically and assigned an arbitrary
output. No additional witness variables are introduced at any stage.
\end{proof}

\begin{corollary}[Equivalence at polynomial resources]
\label{cor:equivalence}
The following statements are equivalent.
\begin{enumerate}
\item $\NP\subseteq\Ppoly$.
\item There is a deterministic nonuniform polynomial-size
transformation that returns an explicit affine filter with success
one on every satisfiable circuit input.
\item For some fixed $0\le\delta<1$, there is a randomized
nonuniform polynomial-size explicit affine isolator with success
$\Omega(n^{-\delta})$ on every sufficiently large nonempty affine
circuit input.
\end{enumerate}
All resources in statements 2 and 3 are polynomial in the input
description length, as in Theorem~\ref{thm:main}.
\end{corollary}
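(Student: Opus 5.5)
The equivalence follows by closing the cycle $1\Rightarrow2\Rightarrow3\Rightarrow1$, using only results already established.

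\emph{$1\Rightarrow2$.} Assume $\NP\subseteq\Ppoly$. Circuit satisfiability on length-$t$ descriptions with witness cap $v\le t$ then has decision circuits of size $S(t,v)\le(t+2)^{c}$ for a fixed $c$. It is in $\NP$, and one circuit family per length covers every cap $v$, because $v\le t$ is read from the description. Plugging this into Lemma~\ref{lem:converse} gives, for each $(L,n)$ with $n\le L$, a deterministic transformation. Its size is
\[
 n\,(P(L,n)+2)^{c}+\operatorname{poly}(L+n),
\]
which is polynomial in $L$, since $n\le L$ and $P$ is a fixed polynomial. Its output $A=I_n$, $b=w$ isolates every satisfiable input. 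The encoding bounds the number of witness variables by $L$, so the condition $n\le L$ is automatic. The advice is just the hardwired decision circuits, which is polynomial.

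\emph{$2\Rightarrow3$.} This is immediate. A deterministic transformation is a randomized one with an ignored seed. Success one on every satisfiable input in particular gives success $1\ge n^{-\delta}$ on every nonempty affine input, for $\delta=0$, say.

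\emph{$3\Rightarrow1$.} This is exactly Corollary~\ref{cor:exponents}. Given the guarantee $cn^{-\delta}$ with $0\le\delta<1$, choose $k$ with $k/(k+1)>\delta$. Theorem~\ref{thm:main} then yields $\NP\subseteq\Ppoly$.

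The only point requiring care, and the mild obstacle, is matching the resource conventions. In $1\Rightarrow2$, the decision circuits for the padded query length $P(L,n)$ must be polynomial in $L$, and the advice must be fixed per input length $L$. I would handle this by taking the union over $n\le L$ of the circuits from Lemma~\ref{lem:converse}, multiplexed on the parsed value of $n$. This costs at most a factor $L$ and stays polynomial. In $3\Rightarrow1$, "randomized nonuniform polynomial-size" must be read in the sense of Theorem~\ref{thm:main}: computation, advice, seed length and output length all bounded by fixed polynomials in $L$. The hypothesis of statement 3 is stated with exactly this convention, so no further work is needed.
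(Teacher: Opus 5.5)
Your proposal is correct and follows the paper's proof: the same cycle, with Lemma~\ref{lem:converse} multiplexed over the parsed arity $n\le L$ for $1\Rightarrow2$, a seedless isolator with $\delta=0$ for $2\Rightarrow3$, and Corollary~\ref{cor:exponents} for $3\Rightarrow1$. The only detail the paper adds is the arity $n=0$. It lies outside the range $1\le n\le L$ of Lemma~\ref{lem:converse}, so your remark that $n\le L$ is automatic does not cover it; the paper handles this trivial case separately by direct evaluation.
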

\begin{proof}
Statement 1 gives polynomial-size circuits for circuit satisfiability.
Lemma~\ref{lem:converse} and polynomial closure under $P(L,n)$ give
a polynomial-size transformation for each $(L,n)$. To obtain one
isolator circuit for length $L$, multiplex these transformations over
all $1\le n\le L$, selecting the output by the parsed witness arity
and padding its coefficient array. This adds only polynomial overhead;
the case $n=0$ is handled by direct evaluation. Thus statement 2
holds in the original length-indexed model.
Statement 2 implies statement 3 by using no randomness
and choosing $\delta=0$. Statement 3 implies statement 1 by
Corollary~\ref{cor:exponents}.
\end{proof}

Lemma~\ref{lem:converse} also gives converses at larger resources,
with the composition in \eqref{eq:conversesize}. For example, SAT
circuits of size $2^{O((\log(t+2))^r)}$, for a fixed $r\ge1$, give
isolators of size $2^{O((\log(L+2))^r)}$, because $n\le L$ and $P$ is
a fixed polynomial; multiplexing over all $n\le L$ preserves this
bound. A witness-sensitive SAT bound $(t+2)^{O(1)}2^{g(v)}$ with
$0\le g(v)=o(v)$ gives isolators of size $(L+2)^{O(1)}2^{g(n)}$ for
each pair $(L,n)$, since the queries keep the number of witness
variables. Here the isolators have to be indexed by both $L$ and $n$:
multiplexing over all arities at one description length would charge
for the larger witness caps. A subexponential bound stated in the
description length alone must be composed with $P$, and a polynomial
blowup of the query length need not preserve it.

These converses do not turn Theorem~\ref{thm:growingintro} into an
equivalence. Its hypotheses concern polynomial-size isolators, while
quasipolynomial or subexponential SAT circuits give back only
isolators of that larger size. The equivalence proved here is
Corollary~\ref{cor:equivalence}.

\section{Explicit unions and parity decision trees}\label{sec:unions}

Throughout this section take $q=\ell$ and fixed $k$, so the compiled
arity and description length are polynomial in $\ell$.

An explicit union filter is a list of affine systems
$A_a z=b_a$, with the retained filter equal to their union.
The list and all coefficients have polynomial total description
length in $L$. Components may overlap. The component count may
depend on $L,n$, the input syntax, and the seed, subject to a
stated upper bound.

\begin{lemma}[The bounded-union extremum]\label{lem:union}
For the flag partition, every union $T$ of at most $h\ge1$ affine
sets satisfies
\[
 \#\{j:|T\cap K_j|=1\}\le\min(m,2h).
\]
This maximum is exactly $\min(m,2h)$.
\end{lemma}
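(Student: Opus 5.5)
The plan has two parts: an upper bound obtained by charging singleton cells to components, and a matching construction built from two-point lines.

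\emph{Upper bound.} Write $T=T_1\cup\cdots\cup T_h$ with each $T_a$ affine or empty. The bound $m$ is trivial, so it suffices to show the count is at most $2h$. Suppose $|T\cap K_j|=1$, say $T\cap K_j=\{p\}$. Then $p$ lies in some component $T_a$. Since $T_a\cap K_j\subseteq T\cap K_j=\{p\}$ and $p\in T_a\cap K_j$, we get $|T_a\cap K_j|=1$. Assign to each such $j$ one component $a$ with $|T_a\cap K_j|=1$; choosing the least such $a$ is enough. By Lemma~\ref{lem:flag}, each affine $T_a$ meets at most two cells in exactly one point, so each component is assigned at most two indices $j$. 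Summing over the $h$ components gives at most $2h$ singleton cells.

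\emph{Attainment.} Put $u=\min(m,2h)$. The $m$ cells are nonempty and pairwise disjoint, so we pick one point $p_j\in K_j$ for $j=1,\ldots,u$. We pair these points and take each pair as a two-point affine line; if $u$ is odd, the last point becomes a singleton component. This uses $\lceil u/2\rceil\le h$ components. Their union is exactly $\{p_1,\ldots,p_u\}$. Because the cells are disjoint, the union meets each of $K_1,\ldots,K_u$ in exactly the one point $p_j$ and misses every other cell. If fewer than $h$ components are needed, we repeat a component or add empty sets to make the number exactly $h$, or simply note that the statement says ``at most $h$''.

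There is no real obstacle here. The one point that needs care is the charging step: a singleton of the union restricts to a singleton, not an empty set, in the component that contains the point. This is what lets overlapping components be handled by the per-component bound of Lemma~\ref{lem:flag}, whereas a cell could have singleton intersections with several components and still have a larger intersection with the union. That situation only reduces the count, so it does no harm.
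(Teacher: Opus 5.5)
Your proposal is correct and follows the paper's proof: you charge each singleton cell to a component containing its point, apply Lemma~\ref{lem:flag} per component to get $2h$, and attain $\min(m,2h)$ by pairing representatives of distinct cells into two-point lines, with at most one leftover singleton component.
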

\begin{proof}
Write $T=\bigcup_{a=1}^h T_a$, allowing empty components.
If $T\cap K_j=\{x\}$, some component contains $x$ and its
intersection with $K_j$ is also exactly $\{x\}$.
Assign the cell to one such component. By Lemma~\ref{lem:flag},
a component can receive at most two cells, proving the bound
$2h$; the bound $m$ is immediate.

For attainment, choose one representative from each of
$u=\min(m,2h)$ distinct cells. Pair these representatives into
binary affine lines and use a singleton if one point remains.
There are $\lceil u/2\rceil\le h$ components. A binary line
contains only its two endpoints, so the union is exactly the
chosen representatives and isolates precisely those $u$ cells.
\end{proof}

\begin{lemma}[Exact checking for explicit unions]\label{lem:unioncheck}
Whether a known affine region meets an explicit union filter in
exactly one point is decidable in polynomial time in the explicit
input and output lengths.
\end{lemma}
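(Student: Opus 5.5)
The plan is to reduce the union case to polynomially many instances of the single-system check of Lemma~\ref{lem:checking}, with an explicit way to remove double counting from overlapping components. Write the known region as $R=v+\im(B)$ with $r$ independent columns, and the union filter as components $A_az=b_a$ for $a=1,\ldots,h$. Pull each component back to the parameter space $\F_2^r$: its retained set is $U_a=\{u:A_aBu=b_a-A_av\}$. This is affine or empty, and Gaussian elimination returns an origin and a basis, or reports inconsistency. Since $u\mapsto v+Bu$ is a bijection onto $R$, the intersection of $R$ with the union is in bijection with $U=\bigcup_aU_a$. It therefore suffices to decide whether $|U|=1$.

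The decision proceeds in three steps.
\begin{enumerate}
\item Discard the empty $U_a$. If none remain, answer no.
\item If some nonempty $U_a$ has positive dimension, it already contains two points of $U$, so answer no.
\item Otherwise every surviving $U_a$ is a singleton $\{u_a\}$, read off from its origin. Answer yes exactly when all the $u_a$ coincide, which takes pairwise vector comparisons, or a sort.
\end{enumerate}
For correctness, $|U|=1$ holds exactly when $U$ is nonempty, every nonempty component has size at most one, and all these singletons are the same point. The three steps test precisely this condition. Overlaps between components cause no trouble, because a union is a singleton exactly when all of its nonempty pieces are that same singleton.

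For the cost, each component needs one matrix product $A_aB$ and one elimination, both polynomial in the sizes of $A_a$, $b_a$, $B$ and $v$. The number of components is bounded by the output length, and the final comparison is quadratic in that count times $r$. Hence the total is polynomial in the explicit input and output lengths, and the procedure unrolls into a fixed-layout circuit by the same multiplexing remarks used in Lemma~\ref{lem:roots}.

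I expect no real obstacle. The only point needing care is that the count must not be formed as a sum of component counts, since that would overcount overlapping points. The case analysis above avoids inclusion--exclusion entirely, and that is what keeps the check polynomial.
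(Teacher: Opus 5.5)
Your proposal is correct and follows essentially the paper's own argument: intersect the region with each component by Gaussian elimination, reject if any nonempty intersection has positive dimension, and otherwise accept exactly when the surviving singletons are one common point, with an empty list rejecting. Pulling everything back to the parameter space $\F_2^r$ through $u\mapsto v+Bu$ is only a bookkeeping choice and does not change the argument.
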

\begin{proof}
Intersect the known region with each component by Gaussian
elimination. If any nonempty intersection has positive dimension,
it already contains at least two points, so the whole union is
not a singleton. Otherwise every nonempty component intersection
is a computed singleton. Deduplicate those points and accept
exactly when one distinct point remains. An empty list rejects.
These operations are polynomial in the number and sizes of the
explicit systems and handle all overlaps.
\end{proof}

For an all-YES graph, the intersection with each component has an
affine tag image. Lemma~\ref{lem:union} therefore replaces the
right side of \eqref{eq:coupling} by
\begin{equation}\label{eq:unionbound}
 b_\ell=\min\left(1,\frac{2h(L_\ell,n_\ell)}m\right),
\end{equation}
where $h(L,n)$ bounds the output component count at those parameters.

\begin{theorem}[Quantitative union separation]\label{thm:uniongeneral}
Fix $k\ge1$ and use the compiled lengths and arities above.
Suppose a randomized nonuniform polynomial-size explicit-union
isolator has singleton success at least $p_\ell$ on every
nonempty affine input of those compiled lengths and arities, where
\[
 p_\ell-b_\ell\ge\varepsilon_\ell>0,
 \qquad \varepsilon_\ell^{-1}=\ell^{O(1)}.
\]
Then $\NP\subseteq\Ppoly$.
\end{theorem}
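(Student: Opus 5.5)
The plan is to rerun the proof of Theorem~\ref{thm:resource} with $q=\ell$ and the fixed $k$, changing only the filter checker and the averaging bound. Replacing the singleton filter by a union filter affects three ingredients. The regional frequencies $a_j(Q)$ of \eqref{eq:regionalfrequency} are now computed with Lemma~\ref{lem:unioncheck} in place of Lemma~\ref{lem:checking}. The coupling bound \eqref{eq:coupling} becomes $\frac1m\sum_j a_j(Q)\le b_\ell$. The gap $2/n^\alpha-2/m$ is replaced by the hypothesized gap $\varepsilon_\ell$.

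For the coupling step, fix an all-YES pool $Q$ and one seed with union filter $\bigcup_a F_a$. Each $S_Q\cap F_a$ is affine or empty, and its tag projection $T_a$ is affine. Since tag projection is a bijection $S_Q\to V$ (Lemma~\ref{lem:graph}), it carries $S_Q\cap\bigcup_aF_a$ onto $T=\bigcup_aT_a$. Moreover, \eqref{eq:regionidentity} gives $|R_j(Q)\cap\bigcup_aF_a|=|T\cap K_j|$ for every $j$. The number of components is at most $h(L_\ell,n_\ell)$, so Lemma~\ref{lem:union} bounds the number of singleton cells by $\min(m,2h)$. Averaging over $j$ and then over the common seed list gives the bound $b_\ell$. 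As before, this uses that $H_Q$, the advice, and the seed list do not depend on the designation.

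Next, set $\gamma=\varepsilon_\ell/4$ and $\tau=b_\ell+\varepsilon_\ell/2$. If $b_\ell$ or $\varepsilon_\ell$ is not conveniently rational, replace them by rational approximations of polynomial bit length with error below $\varepsilon_\ell/8$; only the hardwired threshold $\tau$ needs to be exact, and the inequalities below keep their slack. Apply Lemma~\ref{lem:seeds} to all nonempty affine inputs of the common length $L_\ell$. This gives $J=O((L_\ell+1)\varepsilon_\ell^{-2})$ seeds, which is polynomial in $\ell$ because $\varepsilon_\ell^{-1}=\ell^{O(1)}$ and $L_\ell=\ell^{O(k+1)}$.

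\begin{itemize}
\item For a one-NO pool, $R_j(Q)=\Sol(H_Q)$ is nonempty affine, so $a_j(Q)\ge p_\ell-\gamma\ge b_\ell+\tfrac34\varepsilon_\ell>\tau$, and the tester rejects.
\item For an all-YES pool, the average of $a_j(Q)$ over $j$ is at most $b_\ell<\tau$, so some designation satisfies $a_j(Q)<\tau$ and is accepted.
\end{itemize}

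With the tester $T_X$ defined as in Theorem~\ref{thm:resource}, these are exactly hypotheses 1 and 2 of Lemma~\ref{lem:domination}. That lemma yields a polynomial-size decider for $\USAT(\ell,\ell)$. Lemma~\ref{lem:VV} then gives $\NP\subseteq\Ppoly$, with the finitely many small lengths hardwired.

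The main obstacle is checking that all tester resources stay polynomial in $\ell$.
\begin{itemize}
\item Lemma~\ref{lem:unioncheck} runs in time polynomial in the explicit output length, which is polynomial in $L_\ell$, even though the component count may vary with the seed.
\item The seed count depends on $\varepsilon_\ell^{-1}$, which is polynomial by hypothesis.
\item The case $b_\ell=1$ is vacuous, since then $p_\ell>1$ is impossible.
\end{itemize}
No promise on unsatisfiable compiled circuits is needed, because every one-NO gadget has a nonempty affine satisfying set.
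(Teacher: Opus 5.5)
Your proposal is correct and follows the paper's proof essentially step for step: a common seed list with error $\varepsilon_\ell/4$, the all-YES average bound $b_\ell$ from Lemma~\ref{lem:union} via the tag projection, testers built with Lemma~\ref{lem:unioncheck}, and then Lemma~\ref{lem:domination} followed by Lemma~\ref{lem:VV}. The only cosmetic difference is the threshold: the paper takes a dyadic point in $(b_\ell+\varepsilon_\ell/4,\,b_\ell+\varepsilon_\ell/2)$, while you approximate $b_\ell+\varepsilon_\ell/2$ to within $\varepsilon_\ell/8$, which leaves enough slack on both sides.
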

\begin{proof}
Fix one common seed list with error at most
$\varepsilon_\ell/4$ using Lemma~\ref{lem:seeds}.
Every one-NO regional empirical frequency is at least
$b_\ell+3\varepsilon_\ell/4$, while the all-YES average is
at most $b_\ell$ by Lemma~\ref{lem:union}.
Choose a rational comparison threshold between
$b_\ell+\varepsilon_\ell/4$ and
$b_\ell+\varepsilon_\ell/2$.
A polynomial-bit dyadic number suffices: a grid with spacing
less than one eighth of a fixed inverse-polynomial lower bound
on $\varepsilon_\ell$ has a point in this interval.
The threshold may be length-dependent advice.

Construct the testers as in Theorem~\ref{thm:resource}, now
using Lemma~\ref{lem:unioncheck}. They reject every NO source,
and every all-YES pool supplies an accepted designation.
Lemma~\ref{lem:domination} gives the promise decider, followed
by Lemma~\ref{lem:VV}. All resources remain polynomial.
\end{proof}

\begin{theorem}[Fixed-exponent explicit-union barrier]\label{thm:unions}
Fix $k\ge1$ and $\alpha=k/(k+1)$. Let $h(n)\ge1$ be an
integer-valued arity bound on the number of affine components of
an explicit-union isolator in the same resource model.
The success guarantee
\begin{equation}\label{eq:unionpromise}
 \frac{2h(n)}{n^\alpha}
\end{equation}
on every sufficiently large nonempty affine circuit input implies
$\NP\subseteq\Ppoly$ whenever the hypothesis is nonvacuous.
In particular, it is an eventually feasible numerical requirement
when $2h(n)\le n^\alpha$ for all sufficiently large $n$.
\end{theorem}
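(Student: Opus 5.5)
We will deduce this from Theorem~\ref{thm:uniongeneral}. Take $q=\ell$ and the fixed $k$, so that $m=(q+1)^k=(n+1)^\alpha$, where $n=n_\ell=(q+1)^{k+1}-1$ and $L_\ell$ are the compiled arity and length from Proposition~\ref{prop:size}. The component bound $h(n)$ depends only on the arity. We may therefore use $h(L_\ell,n_\ell)=h(n_\ell)$ in \eqref{eq:unionbound}, which gives $b_\ell=\min(1,2h(n)/m)$.

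\emph{Nonvacuity.} First I would make precise what ``nonvacuous'' means: the promised success $2h(n)/n^\alpha$ is at most $1$ for all sufficiently large $n$. If $2h(n)/n^\alpha>1$ for infinitely many $n$, the hypothesis cannot hold, and there is nothing to prove. I would read the statement as assuming $2h(n)\le n^\alpha$ eventually. Under that assumption $2h(n)/m<2h(n)/n^\alpha\le1$, so $b_\ell=2h(n)/m$ with no truncation by the minimum.

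\emph{The gap.} The key step mirrors \eqref{eq:gap}. I set
\[
 p_\ell-b_\ell=2h(n)\bigl(n^{-\alpha}-(n+1)^{-\alpha}\bigr)
 >2h(n)\,\alpha(n+1)^{-\alpha-1}\ge(q+1)^{-(2k+1)},
\]
using $h(n)\ge1$ and the computation already done in \eqref{eq:gap}. So we may take $\varepsilon_\ell=(q+1)^{-(2k+1)}=(\ell+1)^{-(2k+1)}$. Its inverse is $\ell^{O(1)}$ because $k$ is fixed.

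\emph{Applying the theorem.} Theorem~\ref{thm:uniongeneral} also requires the number of components to be polynomial. This holds automatically: each component is listed explicitly, so $h(n)$ is at most the output length, which is polynomial in $L_\ell$. Hence the union-checking of Lemma~\ref{lem:unioncheck} stays polynomial. Finitely many small $\ell$, for which $n_\ell$ lies below the promised range, are hardwired, as in Corollary~\ref{prop:promise}. Theorem~\ref{thm:uniongeneral} then yields $\NP\subseteq\Ppoly$.

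The main point to handle carefully is the interaction with $\min(1,\cdot)$ in $b_\ell$. The argument needs $b_\ell<p_\ell$, and this is exactly where nonvacuity enters: if the truncation at $1$ were active, $p_\ell$ would exceed $1$ and the hypothesis would be vacuous. Everything else is the existing machinery.
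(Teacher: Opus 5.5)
Your proposal is correct and follows essentially the same route as the paper. Nonvacuity gives $2h\le n_\ell^\alpha<m$, so the minimum in \eqref{eq:unionbound} is inactive; multiplying \eqref{eq:gap} by $h\ge1$ gives a gap exceeding $1/B_\ell$, and Theorem~\ref{thm:uniongeneral} finishes the argument, with your remarks on hardwiring small lengths and on the polynomial component count only making explicit what the paper leaves implicit.
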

\begin{proof}
Write $h=h(n_\ell)$. Nonvacuity forces $2h\le n_\ell^\alpha<m$,
so \eqref{eq:unionbound} is $b_\ell=2h/m$.
Multiplying \eqref{eq:gap} by $h\ge1$ gives
\[
 \frac{2h}{n_\ell^\alpha}-\frac{2h}{m}
 >\frac{h}{B_\ell}\ge\frac1{B_\ell}.
\]
Theorem~\ref{thm:uniongeneral} applies. More explicitly, the
single-component choice $\gamma=1/(4B_\ell)$ and threshold
$2h/m+1/(2B_\ell)$ already separate the two cases.
If \eqref{eq:unionpromise} exceeds one at arbitrarily large arities,
no transformation can satisfy that eventual promise, and the
conditional claim has no nonvacuous application there.
\end{proof}

\begin{corollary}[Component growth versus success exponent]
\label{cor:unions}
Fix $\beta,\delta\ge0$ with $\beta+\delta<1$.
Suppose $h(n)\le n^\beta$ and an explicit union of at most $h(n)$
affine filters has worst-case success $\Omega(n^{-\delta})$ on
nonempty affine circuit inputs. Then $\NP\subseteq\Ppoly$.
The same conclusion holds for explicit parity decision trees of
polynomial total description length with at most $h(n)$ accepting
leaves.
\end{corollary}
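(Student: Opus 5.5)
The plan is to reduce Corollary~\ref{cor:unions} to Theorem~\ref{thm:unions} by choosing the fixed exponent parameter $k$ so that the hypothesized success dominates the required guarantee \eqref{eq:unionpromise}. Write the success guarantee as at least $c\,n^{-\delta}$ for a fixed $c>0$. I will choose a fixed positive integer $k$ with $\alpha=k/(k+1)>\beta+\delta$; this is possible since $\beta+\delta<1$. I will take the component bound in Theorem~\ref{thm:unions} to be the integer-valued function $h(n)=\max(1,\lfloor n^\beta\rfloor)$, which dominates the actual component count. The isolator then respects this bound, because the actual count is an integer at most $n^\beta$ and is at least one whenever it is nonempty. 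An empty union never succeeds, so allowing zero components changes nothing.

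Next I will verify the two numerical conditions eventually. First, $2h(n)/n^\alpha\le 2\max(1,n^\beta)\,n^{-\alpha}$. This is eventually at most $c\,n^{-\delta}$ because $\alpha>\beta+\delta$ and $\alpha>\delta$, so the success hypothesis implies \eqref{eq:unionpromise} for all sufficiently large $n$. Second, nonvacuity requires $2h(n)\le n^\alpha$, which holds eventually since $\alpha>\beta$. Theorem~\ref{thm:unions} then yields $\NP\subseteq\Ppoly$. The degenerate case $\delta<0$ makes the hypothesis impossible, as in Corollary~\ref{cor:exponents}.

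For parity decision trees, I will convert an explicit tree into an explicit union. Each accepting leaf corresponds to the affine system formed by the parity queries along its root-to-leaf path, with right-hand sides given by the branch outcomes. The accepted set is exactly the (disjoint) union of these systems. The conversion is a polynomial-time walk over an explicitly described tree, so composing it with the isolator keeps all resources polynomial, and the component count equals the number of accepting leaves, which is at most $h(n)$. The first part then applies.

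The only delicate step is that Theorem~\ref{thm:unions} is stated with an integer-valued bound $h(n)\ge1$, whereas the corollary gives the real bound $n^\beta$. The floor-and-max normalization above handles this. Theorem~\ref{thm:unions} is applied in its compiled-arity form, so the bound is evaluated at $n_\ell$; since the inequalities hold for all large $n$, this causes no difficulty.
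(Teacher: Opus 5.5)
Your proposal is correct and follows the paper's proof. You fix $k$ with $\alpha=k/(k+1)>\beta+\delta$ and note that $2h(n)/n^\alpha\le 2n^{\beta-\alpha}=o(n^{-\delta})$, which also makes the requirement eventually nonvacuous; you then invoke Theorem~\ref{thm:unions} and convert a parity decision tree into an explicit union with one affine system per accepting path, exactly as the paper does, and your normalization $h(n)=\max(1,\lfloor n^\beta\rfloor)$ is a harmless bit of extra care that the paper leaves implicit.
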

\begin{proof}
Choose a fixed $k$ with $\alpha=k/(k+1)>\beta+\delta$.
Then
\[
 \frac{2h(n)}{n^\alpha}\le2n^{\beta-\alpha}
       =o(n^{-\delta}).
\]
A positive constant times $n^{-\delta}$ therefore eventually
dominates \eqref{eq:unionpromise}. The latter bound tends to zero,
so this comparison is nonvacuous at large arities.

Each root-to-accepting-leaf path of a parity decision tree specifies
a finite list of parity values, hence an affine system.
Enumerate accepting paths and discard inconsistent systems.
Their union is exactly the accepted set, with at most one component
per accepting leaf. The list's size is polynomial in the explicit
tree size, even if coefficients along a path are repeated.
Apply the explicit-union result.
\end{proof}

The bound counts accepting leaves, not depth. If the number of
components is bounded in terms of the description length $L$, its
value $h(L_\ell,n_\ell)$ at the compiled parameters is what matters,
and a polynomial in $L$ need not be small relative to $n$. Once
$2h\ge m$, the bound \eqref{eq:unionbound} equals one and no
separation is possible; Lemma~\ref{lem:union} shows that this value
is attained. So filters with an unrestricted polynomial number of
components, and output circuits from which the components cannot be
read off, are not covered.

\section{Limits of the method}\label{sec:barriers}

This section asks how small the threshold can be made by changing
the cells or the parameters, while keeping one block of $q$ bits per
source. It is at least $2q/n$, and the flag construction attains this
up to a factor $1+1/q$.

Consider any indexed sequence of $m\ge2$ nonempty proper affine
regions $\Fam=(K_1,\ldots,K_m)$ in $V=\F_2^d$, allowing
repetitions. Retain the equal-weight statistic \eqref{eq:f} and set
\begin{equation}\label{eq:extrema}
 g=\max_{t\in V}f_{\Fam}(\{t\}),\qquad
 c'=\max_{\substack{T\text{ affine}\\|T|\ne1}}f_{\Fam}(T),
 \qquad c=\max(g,c').
\end{equation}
The empty set may be included in the second maximum.
The ordinary regional averaging argument separates success from $c$.
Using global singleton success also on an all-YES pool gives the
symmetric scalar threshold
\begin{equation}\label{eq:rho}
 \rho=\max\left(g,\frac{c'}{1+c'-g}\right).
\end{equation}
Its denominator is positive: if $g<1$ this is immediate; if $g=1$,
a point lies in all indexed regions, and a two-point line through
it and outside one proper region shows $c'>0$.
To see the threshold, if empirical global singleton frequency on
an all-YES graph is $a\ge u>\rho$, its regional average is at most
\[
 ag+(1-a)c'\le\max\bigl(g,\;ug+(1-u)c'\bigr)<u.
\]
The strict inequalities follow from $g<u$ and
$c'<u(1+c'-g)$. A one-NO pool with global success at least $u$
then has the opposite regional inequality, giving the same
domination separation as before.

\begin{theorem}[Sharp scalar bounds]\label{thm:scalarbarrier}
Every such indexed region family satisfies
\[
 g\ge\frac1m,\qquad c\ge\frac2m,\qquad
 \rho\ge\frac2{m+1}.
\]
The flag family with $d=m-1$ attains both latter bounds exactly:
$g=1/m$, $c'=c=2/m$, and $\rho=2/(m+1)$.
\end{theorem}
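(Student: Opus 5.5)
We prove the three lower bounds in turn and then verify the flag family by direct computation.

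\emph{The bound on $g$.} Average over points. For each $t\in V$, $f_{\Fam}(\{t\})=\frac1m\#\{i:t\in K_i\}$. Since every $K_i$ is nonempty, choose $t\in K_1$. Then $f_{\Fam}(\{t\})\ge1/m$, so $g\ge1/m$.

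\emph{The bound on $c$.} This is the two-point line argument from the attainment part of Lemma~\ref{lem:flag}. Take $t\in K_1$. Since $K_1$ is proper, pick $t'\notin K_1$.

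If some region $K_i$ contains $t'$ but not $t$, consider the line $T=\{t,t'\}$. It meets $K_1$ exactly in $\{t\}$ and $K_i$ exactly in $\{t'\}$. Hence $c'\ge f_{\Fam}(T)\ge 2/m$.

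Otherwise every region containing $t'$ also contains $t$. Then the number of regions containing $t$ exceeds the number containing $t'$ by at least one. I would rather organize this generally. For a line $\{t,t'\}$, the regions with a singleton intersection are exactly those containing one point but not the other. Write $N(t)$ for the number of regions containing $t$. Then
\[
m\,f_{\Fam}(\{t,t'\})\ge|N(t)-N(t')|.
\]
Combining with singletons gives
\[
mc\ge\max\bigl(\max_t N(t),\ \max_{t,t'}|N(t)-N(t')|\bigr).
\]
Suppose $mc\le1$, i.e.\ $c<2/m$ in integer terms. Then every $N(t)\le1$, and any two points differ by at most one in $N$. So every point lies in at most one region. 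Choose $t\in K_1$ and $t'\in K_2$.

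If $t'\notin K_1$ and $t\notin K_2$, the line $\{t,t'\}$ isolates both $K_1$ and $K_2$, a contradiction. These memberships hold because $N\le1$ and $t\in K_1$, $t'\in K_2$. The case where $K_1=K_2$ as repeated indices also fails: then $t$ lies in two indexed regions, so $N(t)\ge2$. Hence $c\ge2/m$.

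\emph{The bound on $\rho$.} Note that $\rho$ is increasing in $c'$ and, for fixed $c'$, in $g$. Indeed, $c'/(1+c'-g)$ increases with $g$, and $x/(1+x-g)$ increases in $x$ since $1-g\ge0$. There are two cases.

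If $g\ge2/(m+1)$, we are done immediately.

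Otherwise $g<2/(m+1)$. Since $g\ge1/m$, the step above (the case $mg\le1$, where every $N\le1$) gives $c'\ge2/m$. The lines used there are nonsingletons, so this bound is on $c'$ itself, not just on $c$. Then
\[
\rho\ge\frac{2/m}{1+2/m-1/m}=\frac{2}{m+1}.
\]
I expect this case analysis to be the main obstacle. Specifically, the line bound must hold for $c'$ whenever $g$ is small, and when $mg\ge2$ we need $g\ge2/m>2/(m+1)$.

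\emph{The flag family.} Lemma~\ref{lem:flag} gives $g=1/m$ and $c'=c=2/m$. Substituting,
\[
\rho=\max\Bigl(\frac1m,\ \frac{2/m}{1+1/m}\Bigr)=\frac{2}{m+1}.
\]
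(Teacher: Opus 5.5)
Your proof is correct and takes essentially the paper's route. You use a point for $g$, the intersect-or-disjoint dichotomy for $c$ (your contradiction from $N\le1$), and the same two-case monotonicity argument for $\rho$. The only differences are that you re-derive $c'\ge2/m$ from $mg\le1$ where the paper deduces it from $c=\max(g,c')\ge2/m$ and $g<2/m$, and that your $|N(t)-N(t')|$ detour is harmless but unnecessary.
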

\begin{proof}
A point in any nonempty indexed region proves $g\ge1/m$.
Choose two indexed regions. If they intersect, a singleton in
the intersection meets both once. If they are disjoint, the
affine line through one point from each consists of exactly those
two points and meets both regions once. Other regions only
increase the statistic, so $c\ge2/m$ in either case.

If $g\ge2/(m+1)$ then the bound for $\rho$ is immediate.
Otherwise $g<2/(m+1)<2/m$, and $c\ge2/m$ forces $c'\ge2/m$.
For fixed $g<1$, the function $x/(1+x-g)$ is increasing in $x$,
and for fixed positive $x$ it is increasing in $g$.
Consequently
\[
 \rho\ge\frac{c'}{1+c'-g}
 \ge\frac{2/m}{1+2/m-1/m}=\frac2{m+1}.
\]
The flag values follow from Lemma~\ref{lem:flag} and substitution
into \eqref{eq:rho}.
\end{proof}

\begin{corollary}[One-block-per-source arity barrier]
\label{cor:blockbarrier}
Suppose a compiler uses $d$ tag bits and one $q$-bit block per
source, so $n=d+mq$. Its scalar regional thresholds obey
\begin{align*}
 c&\ge\frac2m=\frac{2q}{n-d}\ge\frac{2q}n,\\
 \rho&\ge\frac2{m+1}=\frac{2q}{n-d+q}\ge\frac{2q}{n+q}.
\end{align*}
For the flag parameters $d=m-1$, $n=m(q+1)-1$, the exact values are
\[
 c_{\rm flag}=\frac{2(q+1)}{n+1},\qquad
 \rho_{\rm flag}=\frac{2(q+1)}{n+q+2}.
\]
Thus the flag is optimal at fixed $m$. Relative to the respective
arity-only lower bounds, each flag value is within a factor
$1+1/q$.
\end{corollary}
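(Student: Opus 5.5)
The corollary is pure substitution into Theorem~\ref{thm:scalarbarrier}, combined with the arity identity $n=d+mq$. I will first solve this identity for the number of sources, $m=(n-d)/q$. Then I insert it into the family-independent bounds $c\ge2/m$ and $\rho\ge2/(m+1)$ of that theorem. Only the one-block-per-source form of the compiler enters here, and it enters only through this identity.

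For $c$, I write $2/m=2q/(n-d)$. Since $d\ge0$ and $n-d=mq>0$, this is at least $2q/n$. For $\rho$, I write $m+1=(n-d+q)/q$, so $2/(m+1)=2q/(n-d+q)\ge2q/(n+q)$, again using $d\ge0$.

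For the flag parameters I use the equality cases of Theorem~\ref{thm:scalarbarrier}: $c_{\rm flag}=2/m$ and $\rho_{\rm flag}=2/(m+1)$. With $d=m-1$ we have $n+1=m(q+1)$, so $m=(n+1)/(q+1)$. This gives $c_{\rm flag}=2(q+1)/(n+1)$. It also gives $m+1=(n+q+2)/(q+1)$, hence $\rho_{\rm flag}=2(q+1)/(n+q+2)$.

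Optimality at fixed $m$ is immediate, because the flag attains the universal lower bounds $2/m$ and $2/(m+1)$, which depend only on $m$.

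For the factor $1+1/q$, I compare each flag value with its arity-only bound. The first ratio is
\[
 \frac{c_{\rm flag}}{2q/n}=\frac{(q+1)n}{q(n+1)}<1+\frac1q .
\]
The second ratio is
\[
 \frac{\rho_{\rm flag}}{2q/(n+q)}=\frac{(q+1)(n+q)}{q(n+q+2)}<1+\frac1q .
\]
There is no real obstacle in this argument. The only points needing care are that $d\ge0$, so the dropped $d$ can only enlarge the denominators, and that $q\ge1$, so all denominators are positive.
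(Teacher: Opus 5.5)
Your proposal is correct and matches the paper's proof: both substitute $m=(n-d)/q$ into the universal bounds $c\ge 2/m$ and $\rho\ge 2/(m+1)$ of Theorem~\ref{thm:scalarbarrier}, and substitute $m=(n+1)/(q+1)$ for the flag family. Your two ratios agree with the paper's and are below $1+1/q$ for the reason you give.
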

\begin{proof}
Substitute $m=(n-d)/q$ into Theorem~\ref{thm:scalarbarrier}.
For the flag substitute $m=(n+1)/(q+1)$. The ratios to the
arity-only lower bounds are respectively
\[
 \frac{q+1}{q}\frac n{n+1}<1+\frac1q,
 \qquad
 \frac{q+1}{q}\frac{n+q}{n+q+2}<1+\frac1q.
\]
\end{proof}

These bounds concern the statistic \eqref{eq:f} and compilers that
spend one $q$-bit block per source. They say nothing about other
reductions or about filters that are not affine. Note also that
$2q/n$ is a lower bound in terms of the arity alone. The flag value
is $2(q+1)/(n+1)$, and its ratio to $2q/n$ tends to one as $q$ grows,
even when $k$ grows as well.

The bounds explain the role of $\log n/n$ in
Section~\ref{sec:growing}. If the arity is subexponential,
$n=2^{o(q)}$, then $C\log n=o(q)$ for every fixed $C$, so a guarantee
$C\log n/n$ lies below both floors: $cn\ge2q$, and
$\rho n\ge2qn/(n+q)\ge q$ because $n\ge q$. A fixed logarithmic
numerator therefore cannot be treated with subexponential arity in
this model. Theorem~\ref{thm:constantlog} uses the exponential-arity
choice \eqref{eq:constantlogk} instead, which is why its conclusion
is the rate $2\kappa(a)/C$ and not a subexponential bound. A
different reduction is not excluded.

Finally, the bounds show what the residual polynomials achieve. A
compiler that encodes the cell equations injectively in the block
needs $q\ge\max_i\codim K_i$. For the flag cells this means
$q\ge d=m-1$, hence $n\ge m^2-1$ and a threshold $2/m$ of at least
$2/\sqrt{n+1}$. With root lists the block has width $q$ while the
maximum codimension is $d=(q+1)^k-1$, which exceeds $q$ for $k>1$.
The candidate lists are computed wires. They add to the size of the
circuit, which Theorem~\ref{thm:resource} accounts for, and not to
its arity. What remains is the $q/n$ scale of
Corollary~\ref{cor:blockbarrier}.

\section{Conclusion and open problems}\label{sec:conclusion}

For isolators that output a system of linear equations, reading the
circuit does not help much. Success probability $\Omega(n^{-\delta})$
with $\delta<1$ is equivalent to $\NP\subseteq\Ppoly$, and success
$\omega(\log n/n)$ already gives satisfiability circuits of size
$(s+2)^{O(1)}2^{o(v)}$. The proof hides the dimension of an affine
solution set in the satisfiability of $m$ source instances. Two
ingredients keep the arity of the gadget close to $m$ times the
source witness length: the flag partition, which needs only $m-1$ tag
bits for $m$ cells, and residual polynomials with root lists, which
let a block of $q$ bits serve a cell of codimension far above $q$.

Three questions remain open.
\begin{enumerate}
\item \emph{The endpoint.} Does success $C/n$ for a constant $C$ have
any consequence? With one $q$-bit block per source the threshold is
at least $2q/n$ (Corollary~\ref{cor:blockbarrier}), so the present
method cannot reach it.
\item \emph{Implicit filters and many components.} Do the results
extend to isolators that output a circuit computing an affine
predicate, without its coefficients? Our test needs the coefficients.
For explicit unions, the flag bound is trivial once the number of
components reaches $m/2$ (Lemma~\ref{lem:union}).
\item \emph{Arbitrary outputs.} For isolators whose output is an
arbitrary circuit, the known collapses \cite{DKMW,BD} are far from
$1/n$. Does any inverse-polynomial success probability have a
consequence there?
\end{enumerate}

\end{document}